\numberwithin{equation}{section}
\newtheorem{theorem}{Theorem}
\newtheorem{lemma}{Lemma}
\newtheorem{proposition}{Proposition}
\newtheorem{remark}{Remark}
\newtheorem{example}{Example}
\newcommand{\bP}{\mathbb{P}}
\newcommand{\bE}{\mathbb{E}}
\newcommand{\bR}{\mathbb{R}}
\newcommand{\mA}{\mathcal{A}}
\newcommand{\mB}{\mathcal{B}}
\newcommand{\mC}{\mathcal{C}}
\newcommand{\mF}{\mathcal{F}}
\newcommand{\mI}{\mathcal{I}}
\newcommand{\mJ}{\mathcal{J}}
\newcommand{\mL}{\mathcal{L}}
\newcommand{\mK}{\mathcal{K}}
\newcommand{\mP}{\mathcal{P}}
\newcommand{\mS}{\mathcal{S}}
\newcommand{\mT}{\mathcal{T}}
\newcommand{\mV}{\mathcal{V}}
\newcommand{\bI}{{\mathbf I}}
\newcommand{\T}{\textsf{T}}
\begin{document}
 
 \title{Efficient Advert Assignment}
 
\author[1]{Frank Kelly}
\affil[1]{University of Cambridge}
 
\author[2]{Peter Key}
\affil[2]{Microsoft Research Cambridge}

\author[3]{Neil Walton}
\affil[3]{University of Amsterdam}
\date{}

\maketitle

\begin{abstract}
We develop a framework for the analysis of  large-scale Ad-auctions where
adverts are assigned over a continuum of search types. 
For this pay-per-click market, we provide an efficient 
mechanism that maximizes social welfare. 
In particular, we show that the social welfare optimization can be solved in separate optimizations conducted on the time-scales relevant to the 
search platform and advertisers. Here, on each search occurrence, the
platform solves an assignment problem and, on a slower 
time-scale, each advertiser submits a bid which matches its demand for click-throughs with supply. Importantly, knowledge of global parameters, such as the distribution of search terms, is not required when separating the problem in this way. 
Exploiting the information asymmetry between the platform and advertiser, we describe 
a simple mechanism which incentivizes truthful bidding and has a unique
Nash equilibrium that is socially optimal, and thus implements our
decomposition. Further, we consider models where advertisers adapt their
bids smoothly over time, and prove convergence to the solution that
maximizes social welfare.
Finally, we describe several extensions which illustrate the flexibility and tractability of our framework.
\end{abstract}

\let\clearpage\relax


\section{Introduction}

Ad-auctions lie at the heart of search markets and generate billions of
dollars in revenue for
platforms such as Bing and Google. Sponsored search auctions provide a
distributed mechanism
where advertisers compete for their adverts to be shown to users of the
search platform, by bidding
on search terms associated with queries. 

The earliest
search auction\footnote{initiated in 1998 by GoTo.com, which later became 
Overture and then part of Yahoo!} required that advertisers bid
a separate price to place an advert in each position on the search page. 
This design was soon abandoned for one where an advertiser 
simply bid an amount
per click: this amount was converted to adjusted bids for each
position by 
multiplication by the platform's estimate of
click-through probabilities; the highest adjusted bid 
won the first position, the second-highest the second position, 
and so on, with payments only made when an advert was clicked. 
The shift from an advertiser making a separate bid for each position
to the advertiser making a single bid and being charged per click
is an example of \textit{conflation}~(\cite{milgrom2010simplified}): advertisers are required to make
the same bid per click whatever the position of the
advert.

The design used to assign adverts 
to positions on the page and the rules used to determine payments 
have changed several times, with platforms such as BingAds or
Google AdWords  using variants of the generalised second-price
auction (GSP) to determine the price per click: 
under GSP 
the amount an advertiser
pays when its advert is actually clicked is the smallest price per click
that, if bid, would have won the same advert position~(\cite{varian07, eos07}).

In current auctions
a fundamental information asymmetry between
the platform and  advertisers has emerged, in that the platform  typically knows more 
than an advertiser about the search being conducted.
For example, information
on the user conducting the search 
may comprise location, previous search history,
or personal information provided by the user on sign-in
to a platform, any of which may affect click-through probabilities.  
The keyword and additional query information all vary randomly with a
distribution that is, in principle, unknown to the platform and advertisers.
However, the platform can
choose prices and an allocation of adverts to positions 
using the platform's additional
information.  In contrast to the platform,  the advertiser has to rely on  more coarse-grained
information, perhaps just the keywords of a query together with
a crude categorization of the user.
At best an advertiser sees censored information conditional on her advert
being shown and clicked: the advertiser has no information about auctions
where her advert was 
either not shown (a losing auction for her) or not clicked,
unless the platform chooses to reveal such information.

Variability in the platform's additional information
creates additional 
variability in the observations available to the advertiser.
It is difficult 
for the advertiser
to view consecutive allocations by the platform as repeated instances of the
same auction since, even if the keyword and range
of competitors stay the same, the platform's additional
information varies from search to search. 
For example, two different searchers for the same keyword may have very
different preferences for adverts, giving different click-through 
probabilities
and thus different auctions. These auctions are conflated,
with the same bid from an advertiser used in each of them,
and this conflation is additional to the conflation over positions. 
The observations available to the advertiser are inherently
stochastic, with the probability of a click-through 
fluctuating from search to search, and 
need to be filtered in order to estimate click-through rates. 
Thus the information asymmetry
between platform and advertisers induces a temporal asymmetry:  
the platform observes each search as it happens, whereas the advertiser has
to rely on delayed and aggregated feedback.    
At the time of writing,   platforms typically provide delayed feedback 
to the advertiser on quantities such as number of impressions (i.e., 
appearances), average position, average cost per click and so on, averaged over some interval of time.

In this paper we develop a framework to address the 
information and temporal asymmetries directly.  
The framework is described in~Section~\ref{sec:assign}
and operates as follows.   Each advertiser submits a real-valued bid.  
Using the 
advertisers' bids  and the platform's 
estimated click-through probabilities, the platform 
assigns adverts to maximize the (expected) bid of an advert receiving a
click. This is a classical assignment problem and is solved with low
computational overhead on each search instance. We price adverts with a
form of parametrized VCG-payment: if an advert receives a
click-through, then the pricing of that advert requires one further
solution to the assignment problem. 
We model advertiser $i$ as a 
utility maximizer,  who wants to maximize her 
payoff
$u_i(y_i)=U_i(y_i)- \pi_i y_i$, where $U_i$ is her private, concave,
utility function
and where
$\pi_i$ and $y_i$ are, respectively, the (average) price she pays per click
and the click-through rate she achieves.  Under a monotonicity
condition which will be satisfied if the platform's additional
information is sufficiently fine-grained, we 
prove that the Nash equilibrium achieved by advertisers maximizes the
social welfare of the advertisers.
The framework is extended, in Section~\ref{sec:ext}, 
to allow an advertiser to make 
different bids for different 
keywords or categories of user. 

Optimization frameworks of this form are well-established in the communication network community, where users, the network and its components must be separated.  There the phrase ``Network Utility Maximization'' has been coined, but this framework has only recently found its way into Mechanism Design (see \cite{maheswaran2004social}, \cite{YaHa07} and \cite{jots09}).   By contrast, much of the existing literature on sponsored search has needed to restrict
attention to an isolated instance  of an auction (a single query, 
repeated without variation)  to make progress.   
We focus upon the stream
of search queries: their randomness and the resulting information asymmetry
is an intrinsic aspect of our framework.

Implementation of a Nash equilibrium in the economics literature
is typically based on the assumption of complete information.
In the context of sponsored search, where an advertiser
is bidding in a conflated set of auctions with little information
on users or competitors (see \cite{pk:11}), the complete information
assumption is not compelling.
As~\cite{YaHa06}
discuss in the context of communication networks,
an alternative justification for equilibrium is needed and is available.
In Section~\ref{Dynamics}
we consider dynamics and convergence under adaptive bid  updates
by advertisers,  and show that under smooth updating of bids,  bid
trajectories converge to the unique Nash equilibrium.

The mechanism established by our analysis is simple, flexible and
implementable. It reduces to the preferred equilibrium of the
generalized second price auction (GSP) described in~\cite{varian07, eos07} 
in the special case considered there. But GSP 
requires an ordered layout of interchangeable adverts, and 
does not readily adapt to more complex page layouts, such as text rich adverts, adverts of variable size or adverts incorporating images -- each of which are of current and increasing demand for modern online advertisement platforms. However, the flexibility to compare and price complex assignments is inherent in VCG mechanisms, and through this, we determine efficient pricing implementations for general page layouts.

It is well known in the economic literature that market clearing prices that equate to marginal utility will maximize social welfare.
However, this does not guarantee that such prices can
be implemented 
on the relevant time-scales, where adverts are assigned
per impression and charged per click, 
and search-engine-wide optimization is a highly non-trivial task. 
Our
results show that
social welfare can be optimized 
by a low complexity mechanism which assigns and prices adverts on the time-scales required for sponsored search.

\subsection{Outline}
In Section~\ref{sec:assign},  we introduce a model of sponsored search
where the platform distributes advertisers' bids over an infinitely large 
collection of keyword auctions. We define an auction mechanism where an assignment problem is solved  for each search occurrence (Section \ref{Assignment Model}). We introduce a monotonicity property, requiring  click-through rates to continuously increase with bids. 
The mechanism's pricing scheme is defined in Section \ref{Sec:pricing}.  We discuss three per-click price implementations, two are deterministic and one is randomized.

In Section~\ref{sec:opt}, we discuss the objective of platform-wide efficiency for a collection advertisers with concave utility functions.  We apply a 
decomposition argument to a social welfare optimization taken over the
uncountably infinite set of constraints in our model 
of sponsored search. The argument is based on techniques from
convex optimization and duality; proofs, complicated by the infinite setting,
are mostly relegated to the appendix. These preliminaries establish that
if 
advertisers equate bids with their marginal utility
and the platform solves a maximum weighted matching 
assignment problem for each search instance then social welfare
will be maximized. Crucially, the time-scale and information asymmetry in this
decomposition are those  relevant to sponsored search. 
The advertisers are optimizing over a slower time-scale than the platform,
and the platform uses the submitted  bids to solve   on-line
a form of generalized first price auction.

After these preliminaries, in Section~\ref{Mech} 
we make the connection with mechanism design and  
strategic advertisers. In particular, we find the form of a rebate which
incentivizes advertisers to truthfully declare bids 
that equate to their marginal utilities. This produces 
a unique Nash equilibrium which implements our decomposition, 
and this is our main result (Theorem \ref{NashTheorem}).
By the separability of our optimization, the rebate can be computed by a simple 
mechanism
requiring a single additional computation, namely the solution
 of an assignment problem, for each click-through. 
Hence  assignment and pricing occur per  search query 
and  involve straightforward
polynomial-time computations.  The platform solves a (primal)
assignment problem on a per-search time-scale; and 
advertisers maximize their payoffs by solving 
a dual optimization problem over a longer time-scale  (Proposition
\ref{riProp}).     Finally we prove the theorem, which essentially follows
from strong duality.

Section~\ref{Dynamics} contains our discussion of
dynamics and of convergence to the Nash equilibrium 
under adaptive bid  updates
by advertisers.
In Section~\ref{sec:genassignments} 
we allow more complex page layouts and                         
control of the number of positions displayed (for instance, through reserve
prices), and in 
Section~\ref{sec:ext} we allow advertisers to make
different
bids for different keywords or
categories of user. 
Section~\ref{sec:related} discusses the relationship
between our results and earlier work, and Section~\ref{sec:conclude}
concludes. 

\section{The Assignment and Pricing Model}\label{sec:assign}
We begin with 
notation that reflects a sponsored search setting, where a limited set of adverts are shown in response to 
users submitting search queries.  We let $i\in\mI$ index the finite set of 
advertisers.
Each has an advert which they wish to be shown on the pages of 
search results.
An advert, when shown, is placed in a slot $l\in\mL$. 
The set of slots is ordered, with the first (lowest ordered) slot
representing the top slot. 
Let $\tau\in\mT$ index the \textit{type} of a search conducted by a user. 
The set $\mT$ is an infinitely large set. 
The type $\tau$ may incorporate information such as the keywords, location, previous
search history, and any other information 
the platform has on the search or
searcher. As $\tau$ varies, features -- such as the keyword -- are allowed to change.
Let $p_{il}^\tau$ be
the probability of a click-through on advert $i$ if is shown in slot $l$: this
probability is estimated by
the platform and will depend on the type $\tau$.

Over time, a large number of searches from the set $\mT$ are made. We
assume these occur with distribution $\bP_\tau$. Thus we view
the click-through probability $p_{il}:\mT\rightarrow [0,1]$ as a random
variable defined on the  type space $\mT$ and with distribution 
$\bP_{\tau}$. 
For example, the random variables $p=(p_{il} : i\in\mI, l\in\mL)$
might admit a joint probability density function $f(p)$.
So, for $z=(z_{il} : i\in\mI, l\in\mL)\in [0,1]^{\mI\times \mL}$, 
\begin{equation*} 
\bP_\tau( p \leq z)= \int_{ [0,1]^{\mI\times \mL}}  \bI [p \leq z ]  f(p) {\rm d} p.  
\end{equation*}
Here 
$\bI$ is the indicator function and vector inequalities, e.g.,  $p \leq z$,
are taken componentwise, $p_{il}\leq z_{il}$ $\forall i\in\mI, l\in\mL$. 
  
We exploit the inherent randomness in $p_{il}$ 
for the optimal placement of adverts. We assume that the platform
has access to the information about the query captured in
$\tau$, and so can successfully predict the click-through probability
$p_{il}^\tau$, whilst the advertiser does not have access  to such
fine-grained search information.
Later, in Sections~\ref{Mech}
and~\ref{Dynamics}, we shall see that the platform  can use this
information asymmetry to guide the auction system towards an optimal outcome.

\subsection{Assignment Model}\label{Assignment Model}

Next we describe a mechanism by which the platform 
assigns adverts. 
Suppose advertiser $i$ submits a bid $b_i$, which reflects what the advertiser is willing to pay for a click-through.
The bid $b_i$ is a non-negative real number. Later, in Section
\ref{sec:ext}, we shall allow an advertiser to submit different bids for
different categories of search type, for example for 
different keywords.

Let $b = (b_i, i\in\mI)$.
Given the information $(\tau, b)$, the following
optimization maximizes the expected sum of bids on 
click-throughs 
from a single search.
\newline

\noindent \textbf{ASSIGNMENT}($\tau,b$)
\begin{subequations}\label{ASMT}
\begin{align}
&\rm{Maximize}  &&\sum_{i\in\mI} b_i \sum_{l\in\mL} p_{il}^\tau
x_{il}^\tau&\label{ASMT1}\\
&\rm{subject\; to} && \sum_{i\in\mI} x^\tau_{il} \leq 1, \quad  l\in\mL,
&\label{ASMT2}\\
& && \sum_{l\in\mL} x^\tau_{il} \leq 1, \quad i\in\mI,  &\label{ASMT3}\\
&\rm{over} && x^\tau_{il}\geq 0,  \quad i\in\mI, l\in\mL.\label{ASMT4} &
\end{align}
\end{subequations} 

The above optimization is an assignment problem,
where the constraint~\eqref{ASMT2} prevents a slot containing
more than one advert, and the constraint~\eqref{ASMT3} prevents 
any single advert being shown more than once on a search page.
The assignment problem is highly appealing from a computational perspective, firstly,
because an integral solution can be found efficiently (see \cite{Ku55,Be88}) and,
secondly, because there is no need to pre-compute the assignment. The
assignment problem can be solved on each occurrence of a search of type
$\tau\in\mT$, and an integral solution forms a maximum weighted matching
of advertisers $\mI$ with slots $\mL$.

We apply the convention that if $b_{i}=0$ 
then $x_{il}^\tau=0$ for $l\in\mL$,
so that a zero bid does not receive clicks. Let 
\begin{equation}
\label{ASMT9}
y_{i}^\tau = \sum_{l\in\mL} p_{il}^\tau x_{il}^\tau, \quad  \quad
 {y}_i = \bE_\tau y^\tau_{i}.
\end{equation}
The solution $x^\tau$ to the assignment problem~\eqref{ASMT}  may not be
unique: however the solution will be unique with probability one 
if, for example,
the distribution of click-through probabilities $p$ admits a
density. We make the milder assumption that $(y_i^\tau, i \in \mI)$
is unique with probability one.

Note that $y_{i}^\tau$ is the click-through rate for
advertiser $i$ from a given search page,
and ${y}_i$ is the click-through rate averaged over $\mT$.
(We shall \textit{not} use $y_{i}$ for the random variable $y_{i}^\tau$.) 
In our model the information asymmetry between the
platform and advertiser is captured by the search type $\tau$
which is known to the platform but not to the advertiser: thus
we assume that $y_{i}^\tau$ is
known to the platform, from its solution to the assignment problem,
while only the average ${y}_i$ is reported to, or accessible for estimation 
by, advertiser $i$.
For an optimal solution to the above assignment problem,
write ${y}^\tau_i  = {y}^\tau_i(b)$  to
emphasize the dependence of ${y}^\tau_i$ on the vector of bids $b$ 
and, similarly,  write ${y}_i  = {y}_i(b)$.
 Let $( b'_i , b_{-i})$ be
the vector
obtained from
$b$ by replacing the $i$th component by  $b'_i$.

We shall assume  the following 
\textit{monotonicity property} of solutions of
\text{ASSIGNMENT}($\tau,b$). 
We assume that
${y}_i(b_i, b_{-i})$ takes the value $0$ when $b_i=0$,  and is strictly increasing in $b_i$ and
continuous in $(b_i,b_{-i})$ whenever any component of $b_{-i}$ is positive.
Without the monotonicity property 
${y}_i(b)$ will be
increasing in $b_{i}$
but may not be strictly increasing or continuous.
A similar assumption has been made by \cite{Nekipelov2015EC}, who argue
that the assumption is natural and satisfied by the sponsored search data
they analyze.

The monotonicity property will generally follow from sufficient variability
of click-through rates. 
For instance, a sufficient condition is that 
the random variables $p$ admit a continuous density
$f(p)$ on the set of click-through probabilities
$\tilde{\mathcal{P}} = \{ p \in [0,1]^{|\mI|\times |\mL|} : p_{il} \geq
p_{ik}, l < k \}$ which is positive on a neighborhood containing  the origin.
Observe that on $\tilde{\mathcal{P}}$ the click-through probability
for a given advert increases as the slot it is shown in 
decreases.
The following result establishes 
the monotonicity property under the above sufficient condition.

\begin{proposition} \label{diffcont}
If the distribution $\bP_{\tau}$  admits  a
continuous probability density function on $\tilde{\mP}$ which
is positive  on a neighborhood containing  the origin
then the mapping $b_i \mapsto {y}_{i}(b_i,b_{-i})$  satisfies
the  monotonicity property.
\end{proposition}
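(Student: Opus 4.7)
The plan is to verify the three requirements of the monotonicity property in order: vanishing at $b_i=0$, continuity in $(b_i,b_{-i})$, and strict monotonicity in $b_i$. The first is immediate from the convention stated after \eqref{ASMT}. Weak monotonicity of $y_i^\tau$ in $b_i$ is a one-line revealed-preference calculation on the LP: feeding the two optimal assignments into the respective objectives at $(b_i,b_{-i})$ and $(b_i',b_{-i})$ and adding yields $(b_i'-b_i)(y_i^\tau(b_i')-y_i^\tau(b_i))\ge 0$; averaging over $\tau$ gives weak monotonicity of $y_i$.

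For continuity, I would use that for each fixed $\tau$ the value of ASSIGNMENT$(\tau,b)$ is a concave, piecewise-linear, hence continuous, function of $b$. At any $b$ where the optimal $y_i^\tau$ is unique, the envelope theorem (equivalently, Berge's maximum theorem combined with uniqueness of the argmax image) makes $b\mapsto y_i^\tau(b)$ continuous at $b$. The paper's standing hypothesis that $(y_i^\tau)$ is unique a.s.\ is itself a consequence of the density assumption on $\tilde{\mP}$ (ties force $p$ to lie on a lower-dimensional algebraic set). Since $y_i^\tau\in[0,1]$, dominated convergence promotes a.e.\ continuity of $y_i^\tau(b)$ to continuity of $y_i(b)=\bE_\tau y_i^\tau(b)$ in $(b_i,b_{-i})$.

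The real work is strict monotonicity when some coordinate of $b_{-i}$ is positive. Fix $b_i'>b_i\ge 0$ and a competitor $j^*$ with $b_{j^*}>0$. I would exhibit a set $A\subset\tilde{\mP}$ of positive $\bP_\tau$-measure on which the uniquely optimal assignment awards slot $1$ to $j^*$ at bid $b_i$ but to $i$ at bid $b_i'$. Concretely, for small $\epsilon>\delta>0$, let
\[
A=\Bigl\{p\in\tilde{\mP}:\, p_{i1}\in(\epsilon,2\epsilon),\;\; b_{j^*}p_{j^*1}\in(b_i p_{i1},\,b_i' p_{i1}),\;\; p_{jl}<\delta\text{ for all other }(j,l)\Bigr\}.
\]
Choosing $\delta$ small relative to $\epsilon$ and to the positive components of $b$ ensures that no other pair $(j,l)$ can outbid either $(i,1)$ or $(j^*,1)$ in the LP, so the two top-slot decisions I have described really are the unique optima at the two bid vectors. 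Hence on $A$, $y_i^\tau(b_i,b_{-i})\le\delta$ while $y_i^\tau(b_i',b_{-i})\ge\epsilon$, a gap of at least $\epsilon-\delta$. The defining conditions carve out a nonempty open box inside a neighborhood of the origin, and the density $f$ is positive there, so $\bP_\tau(A)>0$. Combining with weak monotonicity yields $y_i(b_i',b_{-i})-y_i(b_i,b_{-i})\ge(\epsilon-\delta)\bP_\tau(A)>0$.

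The main obstacle is the bookkeeping in Step~4: one must be careful that the constraint $\tilde{\mP}=\{p_{il}\ge p_{ik}\,\forall l<k\}$ does not collapse the box defining $A$, and that the slot-$1$ comparison really does dominate the matching on both regimes. Choosing $l^*=1$ (the top slot) sidesteps the ordering constraint because it places no upper bound on $p_{j1}$ or $p_{i1}$; and the threshold $\delta$ can be taken small enough (e.g., $\delta<\epsilon\min_j\{b_j/b_{j^*}:b_j>0\}$, with an extra factor controlling how other advertisers could be reassigned) that reshuffling lower slots can change the objective by at most $|\mL|\,\delta\max_j b_j$, which is strictly less than the gain $(b_i'-b_i)p_{i1}$ or $(b_{j^*}-b_i)p_{j^*1}$ from the slot-$1$ swap. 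This reduces the argument to an elementary comparison of two matchings, completing the proof.
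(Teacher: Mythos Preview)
Your proposal is essentially correct and follows the same skeleton as the paper---revealed preference for weak monotonicity, a density/uniqueness argument for continuity, and a positive-measure event near the origin for strictness---but the continuity step is handled quite differently. The paper expresses $\bar x_{il}(b)$ as a sum of halfspace probabilities $\bP(\mu_{\pi\tilde\pi}^{\top}p\ge 0,\forall\tilde\pi)$ and then proves a separate geometric lemma giving Lipschitz continuity of such probabilities in the halfspace normals. Your route via Berge plus dominated convergence is more direct and more elementary; it does not deliver Lipschitz continuity, but that is never used downstream.

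There is one genuine bookkeeping gap in your strict-monotonicity event. As written, $A$ only requires $b_{j^*}p_{j^*1}\in(b_ip_{i1},\,b_i'p_{i1})$, so the ``signal'' $b_{j^*}p_{j^*1}-b_ip_{i1}$ (respectively $b_i'p_{i1}-b_{j^*}p_{j^*1}$) can be arbitrarily small on $A$. Because the competing matchings differ also in how the remaining slots are filled, the slot-1 swap comparison is only decisive when the signal exceeds the noise $O(|\mL|\,\delta\max_k b_k)$ from reshuffling the small entries; no fixed $\delta$ works uniformly over your $A$. The expressions in your obstacle paragraph, $(b_i'-b_i)p_{i1}$ and $(b_{j^*}-b_i)p_{j^*1}$, are not the right lower bounds either (the first compares bid levels, not matchings at a fixed bid; the second mismatches probabilities). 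The fix is simply to shrink $A$ by replacing the open interval with, say, $b_{j^*}p_{j^*1}\in\bigl(\tfrac{2b_i+b_i'}{3}p_{i1},\,\tfrac{b_i+2b_i'}{3}p_{i1}\bigr)$, which bounds both gaps below by $\tfrac{(b_i'-b_i)}{3}\epsilon$ and lets you choose $\delta$ once and for all. The paper's version of this step uses a two-slot event (both $i$ and the competitor $j$ have large click-through in all slots, everyone else small) and compares $b_i(u_{i1}-u_{i2})$ with $b_j(u_{j1}-u_{j2})$; your one-slot version is conceptually the same and arguably cleaner once the interval is tightened.
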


The proof of Proposition \ref{diffcont} is given in Appendix \ref{Lemma1}. 
The sufficient condition of Proposition \ref{diffcont} is far from
necessary, as we shall illustrate later in Example \ref{example:English}. 
Our earlier assumption that $(y_i^\tau, i \in \mI)$ 
is unique with probability one is implied
by the monotonicity property.

\subsection{Pricing Model} \label{Sec:pricing}
Once adverts are allocated, prices must be determined for 
any resulting click-throughs. We consider a mechanism where the
expected rate of payment by
advertiser $i$ is

\begin{align}
\pi_i(b) {y}_{i}(b) 
&=
\int_0^{b_i} \Big({y}_i(b)-{y}_i(b'_i,b_{-i}) \Big)  db'_i .
 \label{bidpriceprimal1}
\end{align}
Here, as before, $b=(b_i: i\in\mI)$ is the vector of
advertisers' bids and $y_i(b)$ is the resulting click-through rate
for advertiser $i$. 
We discuss later the rationale for this formula
in the proper context of mechanism design, in  Section~\ref{Mech}.
For now we note that
the rate of payment~\eqref{bidpriceprimal1} can be readily
implemented
by the platform at a low computational
cost. We give three examples of implementations: the first uses
randomization to estimate the integral~\eqref{bidpriceprimal1};
the second is a form  of VCG price; and the third uses the solution 
of a linear program due
to~\cite{leonard1983elicitation}. The first two require 
the solution of 
just one additional instance of the assignment problem per click-through.

\subsubsection*{A randomized price.}
Suppose the platform solves
{ASSIGNMENT}($\tau,b$),
and observes a click-through on $(i,l)$ --- that is the solution
has $x_{il}^\tau =1$, and the user clicks on the advert in position
$l$, which is for advertiser $i$.  To price this advert, the platform
chooses
$b'_i$ uniformly and randomly on the interval $(0, b_i)$
and additionally solves {ASSIGNMENT}($\tau,(b'_i, b_{-i})$). Let
$y_i^\tau(b'_i, b_{-i}) = \sum_{l\in\mL} p_{il}^\tau x_{il}^\tau$
under a solution to this problem. The platform then charges
advertiser $i$ an amount
\begin{equation}\label{chargecalc}
b_i \left( 1 -
\frac{y_i^\tau(b'_i, b_{-i})}{ y_i^\tau(b)}     \right)
\end{equation}
for the click-through. This charge does not depend on the distribution 
$\bP_\tau$, and will lie between 0 and $b_i$. Taking expectations
over $\tau$  and $b'_i$
shows that the expected rate of payment
by advertiser $i$ is
\begin{align*}
&\bE_{\tau, b'_i} \left[
\sum_{l\in\mL} p_{il}^\tau {x}_{il}^\tau
b_i \left( 1 -
\frac{y_i^\tau(b'_i, b_{-i})}{ y_i^\tau(b)}  \right)
\right]
=   b_i  \left( {y}_i(b)  -  \bE_{b'_i}
\left[{y}_i(b'_i,
b_{-i})\right] \right)
= \int_0^{b_i} \Big({y}_i(b)-{y}_i(b'_i,b_{-i}) \Big)  db'_i, 
\end{align*}
recovering expression~\eqref{bidpriceprimal1}.

Observe that  the additional instance of the assignment problem
does not determine the assignment, and thus will not slow
down the page impression: rather, it is used to calculate
the charge~\eqref{chargecalc}
for a click-through. Indeed,  one could imagine a charge
$b_i$ on the click-through, followed by
a later rebate of a proportion $y_i^\tau(b'_i,
b_{-i})/y_i^\tau(b)$ of the charge.  
The rebate depends
on the uniform random variable $b'_i$ as well as the 
random variable $\tau$: next we shall see
that we can remove the dependence on $b'_i$.

\subsubsection*{A parametrized VCG price.}
Note that 
\begin{equation*}
\int_0^{b_i} {y}_i(b'_i,b_{-i}) db'_i =
\sum_j b_j y_j(b)
- \sum_{j \neq i} b_j y_j(0,b_{-i}),
\end{equation*}
since both expressions share the same derivative with respect to
$b_i$ (see Proposition~\ref{diffsumyprop} of Appendix~\ref{Lemma1})
and both expressions
take the value 0 when $b_i=0$. Thus the rate of payment~\eqref{bidpriceprimal1}
can be implemented by a charge
$b_i$ on a click-through followed by
a later rebate
\begin{equation}\label{chargecalc2}
\frac{1}{y_i^\tau(b)} \Big( \sum_j b_j y_j^\tau(b)
- \sum_{j \neq i} b_j y_j^\tau(0,b_{-i}) \Big).
\end{equation}
The rebate calculation again requires the solution
of one additional instance of the assignment problem, 
this time omitting advertiser $i$. 
This calculation is familiar as the VCG mechanism when
the utility function for advertiser $j$, $j \in\mI$,  is 
replaced by the surrogate linear utility $b_j y_j$. 
The charge minus the rebate has
the usual VCG interpretation as the externality caused
by advertiser $i$, but under these surrogate utilities.

\subsubsection*{Computing all prices simultaneously.} 
\cite{leonard1983elicitation} has shown that VCG prices
in assignment games are a minimal solution to a
dual assignment problem, and this allows prices for all
potential click-throughs to be calculated from the
solution to just one optimization problem. 
 
Let $A^\tau$ be the maximal value achieved by
the objective function~\eqref{ASMT1} in the assignment problem~\eqref{ASMT}.
Then  per-impression VCG prices are
given by the solution $v^\tau, s^\tau$ to the following
optimization problem.
\begin{align*}
&\rm{Minimize}  && \sum_{l\in\mL} v_l \\
&\rm{subject\; to} &&  
\sum_{i\in\mI} s_i +  \sum_{l\in\mL} v_l  =A^\tau, \\
& && s_i + v_l \geq b_i p_{il}^\tau, \qquad i\in\mI, l\in\mL, \\
&\rm{over} && s_i \geq 0, v_l\geq 0,  \qquad i\in\mI, l\in\mL. &
\end{align*}
An initial feasible solution to this dual assignment program is given by
the dual variables corresponding to an optimum of the assignment
problem~\eqref{ASMT} 
and techniques for its solution are reviewed in
\cite{BdVSV}. 

This formulation allows for either pay-per-click or pay-per-impression pricing of adverts.
After solving the problem for $v^\tau, s^\tau$, 
advertiser $i$ can either be charged
the price $v_l^\tau$ for an impression of her advert in slot $l$ or 
be charged the price $v^\tau_l/p^\tau_{il} = b_i - s^\tau_i/p^\tau_{il}$
on a click-through:
in the latter case the result
of \cite{leonard1983elicitation} 
 implies that the rebate $s^\tau_i/p^\tau_{il}$ 
will equal expression~\eqref{chargecalc2}. Observe that the dual assignment
problem to be solved is identical whichever advert is clicked on.

We end this section with a setting where 
particularly simple closed forms are available for prices.

\begin{example} \label{example:English}

If there is a single slot
then the slot will be assigned to the bidder $i$ with the
highest value of  $b_i p_{i1}^\tau$, and if
this results in
a click-through
then the charge will be $ \max_{j\neq i} b_j p_{j1}^\tau \mathbin{/} 
p_{i1}^\tau   $, 
a second price auction on the products $b_j p_{j1}^\tau$.

Suppose next there are $L$ slots with $I$  advertisers bidding and further 
suppose that the click-through probabilities take the form
$p_{il}^\tau = q_{i}^\tau p_l$ 
where 
$p_1>p_2>\ldots>p_L$. Here $p_l$ is a slot effect, and
$q_{i}^\tau$ is an advertiser effect which may depend on
the search query (for example, it may depend on some measure of distance
between the searcher and the advertiser). 
Define the search-adjusted bid $b_{i}^\tau = b_i q_{i}^\tau$ and,
given $\tau$, order the advertisers so that 
$b_1^\tau >  b_2^\tau  >  \ldots  >  b_{I}^\tau$. 
Then advertisers $1,2,\ldots, \min
\{L, I \} $ 
are allocated slots $1,2,\ldots, \min \{L, I \} $ respectively. 
If necessary ties can be broken randomly. 

In this example
it is straightforward
to calculate the expected value of expression~\eqref{chargecalc}
over $b'_i$ explicitly. Set $p_{L+1}=0$ and  $b_i = b_i^\tau = 0$ for $i> I$. Upon  
a click-through
on slot $l$  advertiser $l$ is charged the amount $\pi_l^\tau$ where
\[
\pi_l^\tau q_{l}^\tau =  b_{l+1}^\tau - \frac{1}{p_l} \sum_{m=l+1}^L p_m
(b_{m}^\tau-
b_{m+1}^\tau), \quad l = 1,2,\ldots,L. 
\]
Expressed as a recursion this implies
\begin{equation}  \label{english}
\pi_l^\tau = \frac{ q_{l+1}^\tau}{q_{l}^\tau} \left(
b_{l+1} - \frac{p_{l+1}}{p_l}(
b_{l+1} -\pi_{l+1}^\tau) \right) ,
\quad l = 1,2,\ldots,L
\end{equation}
recovering an equilibrium of the
generalized second price 
auction, \cite{eos07}. Note, however, that the
charges~\eqref{english}, and indeed the slots allocated,
fluctuate with the search type $\tau$. 
The expected revenue, given $\tau$,  is
\begin{equation}  \label{revenue}
\sum_{m=1}^L \pi_m^\tau q_m^\tau p_m
= \sum_{m=1}^L   (p_m - p_{m+1}) b_{m+1} q_{m+1}^\tau. 
\end{equation}

In the model considered by \cite{eos07} and \cite{varian07} 
the random variables $(q_i^\tau, i \in \mI)$ are all 
in fact constants, and in this case there may be multiple
Nash equilibria. For example, suppose $L=I=2$:
then for either one of the advertisers
to bid very high and the other to bid very low is a Nash equilibrium. 
We shall see in following sections that provided our monotonicity
condition is satisfied there is a unique Nash equilibrium.

The restriction that click-through probabilities 
have the product-form $p_{il}^\tau = q_{i}^\tau p_l$ 
implies they lie in a linear subspace 
of $\tilde{\mP}$: thus they do not have 
a density over $\tilde{\mP}$, and so we cannot
appeal to Proposition \ref{diffcont} to justify
the monotonicity property, in particular 
that $y_i(b)$ is a strictly increasing and continuous function of  
$b_i$. But if the advertiser effects $(q_i^\tau, i \in \mI)$
have a continuous probability density positive on 
a neighbourhood of the origin in 
$\{ q \in [0,1]^{|\mI|} \}$ then 
the monotonicity property will follow. Essentially the variability of the 
advertiser effect $q_i^\tau$ smooths out the impact of the bid $b_i$ 
sufficiently that the rate $y_i(b_i, b_{-i})$ is continuous in $b_i$.

\end{example}


\section{Optimization Preliminaries } \label{sec:opt}

In this section we present  an optimization problem
which 
we use to develop various decomposition and duality
results.
In particular, we find that if advertisers equate bids with their marginal utility
and the platform solves a maximum weighted matching 
assignment problem for each search instance,  then social welfare
will be maximized.

We suppose each advertiser $i$ has a utility function, $y_i \mapsto
U_i(y_i)$, where $U_i(\cdot)$ is non-negative, increasing, strictly concave
and continuously differentiable. Our objective is to place adverts 
so as to maximize the
sum of these utilities, in other words to maximize social welfare. 
To simplify the statement of results
we shall assume further that 
$U'_i(y_i) \rightarrow \infty$ as $y_i \downarrow 0$
and $U'_i(y_i) \rightarrow 0$ as $y_i \uparrow \infty$. 
The maximization of social welfare by the auction system is the
following problem. \newline

\noindent\textbf{SYSTEM}($U$, $\mI$, $\bP_\tau$)
\begin{subequations}\label{SYS}
\begin{align}
&\rm{Maximize}  &&\sum_{i\in\mI} U_i(y_i) \label{SYS1}&\\
&\rm{subject\;to} && y_{i} = \bE_\tau \Big[ \sum_{l\in\mL} p_{il}^\tau x_{il}^\tau \Big],\quad
i\in\mI, \label{SYS2} &\\
& && \sum_{i\in\mI} x^\tau_{il} \leq 1, \quad  l\in\mL, \tau\in\mT, \label{SYS3} &\\
& && \sum_{l\in\mL} x^\tau_{il} \leq 1, \quad i\in\mI, \tau\in\mT, \label{SYS4} &\\
&\rm{over} && x^\tau_{il}\geq 0, y_i\geq 0 \quad i\in\mI, l\in\mL.\label{SYS5} & 
\end{align}
\end{subequations}
Inequalities~\eqref{SYS3} and~\eqref{SYS4} are just the
scheduling 
constraints~\eqref{ASMT2} and~\eqref{ASMT3}, that each slot can show at most
one advert and that each advertiser can show at most one
advert, while equality~\eqref{SYS2} recaps the definition~\eqref{ASMT9}
of $y_i$, the expected click-through rate. Over these constraints we maximize social welfare, i.e., the aggregate sum of the utilities.

To solve the above optimization, one could imagine that there is a
centralized designer who knows everything about the entire
system: the advertisers' utilities $U_i(\cdot), i\in\mI$, click-through
probabilities  $p_{il}^\tau, i\in\mI, l\in\mL,
\tau\in\mT$, and the  distribution $\bP_\tau$ over these probabilities. 
This designer then attempts to assign adverts in a way so that $y_i,
i\in\mI$, the
click-through rates received by advertisers, maximize social welfare.
The solution of such an optimization by centralized means is not
possible --- for example, the utilities will not be known --- 
but the form of the solution will help us
develop an appropriate decomposition, respecting the 
time-scales relevant to the platform and advertisers. 
In the next section, on mechanism design, we
consider the game theoretic aspects that arise when, instead of
a single system optimizer, the platform
and advertisers have differing information and incentives. 

Incorporating the constraint~\eqref{SYS2} into the objective 
function~\eqref{SYS1} gives the Lagrangian 
\begin{align*} 
L_{sys}(x,y;b)&=\sum_{i\in\mI} U_i(y_i) +\sum_{i\in\mI} b_i
\bE_\tau\! \left[ \sum_{l\in\mL} p_{il}^\tau x_{il}^\tau - y_{i}
\right],
\end{align*}
where $b_i, i \in \mI$ are the Lagrange multipliers associated with the constraints ~\eqref{SYS2}, with $b_i \geq 0$. 
Notice, we intentionally omit the scheduling constraints from 
our Lagrangian. Thus we seek to maximize the Lagrangian
subject to the constraints~(\ref{SYS3}-\ref{SYS4}) as well
as~\eqref{SYS5}. Let $\mS$ be the set of variables 
$x^\tau=(x^\tau_{il} : i\in\mI, l\in\mL)$ satisfying the
assignment constraints (\ref{ASMT2}-\ref{ASMT4}), and let 
$\mA$ be the set of
variables $x=(x^\tau \in\mS: \tau \in \mT)$
satisfying the assignment constraints~(\ref{SYS3}-\ref{SYS5}).
We see that our Lagrangian problem
is separable in the following sense
\begin{subequations}\label{L1sep}
\begin{align}
\max_{x\in\mA, y \geq 0} L_{sys}(x,y;b)
&=\sum_{i\in\mI} \max_{y_i\geq 0}\left\{ U_i(y_i)-b_iy_i \right\}
\label{L1sep1}\\
&+  \bE_\tau \left[ \max_{x^\tau \in \mS} \sum_{i\in\mI}\sum_{l\in\mL}
b_i p_{il}^\tau x_{il}^\tau \right] \label{L1sep2}.
\end{align}
\end{subequations}

Define 
\begin{equation}\label{USER}
 U^*_i(b_i)=\max_{y_i\geq 0}\left\{ U_i(y_i)-b_iy_i \right\}. 
\end{equation}
The optimization over $y_i$ contained in the definition~\eqref{USER} 
would arise if advertiser $i$ were presented with a fixed price
per click-through of $b_i$: if allowed to choose freely her click-through rate, 
she would then choose $y_i$ such that $U'_i(y_i)=b_i$.
By our assumptions on $U_i(\cdot)$, this equation has a unique solution
for all $b_i \in (0, \infty)$.
Call $ D_i(\xi) =  \{U'_i\}^{-1}(\xi)$ the \textit{demand}
of advertiser
$i$ at price $\xi$. It follows that $U^*_i(b_i)$ can
be written in the form
\begin{equation}\label{DU}
U^*_i(b_i) = \int_{b_i}^\infty D_i(\xi) d\xi;
\end{equation}
call this advertiser $i$'s \textit{consumer surplus} at
the price $b_i$.
From this expression we can deduce that $U^*_i(b_i)$  is positive, decreasing, strictly convex and continuously differentiable.

Observe that the maximization inside the
expectation~\eqref{L1sep2}
is simply the problem {ASSIGNMENT}($\tau, b$), and thus 
we can write
\begin{equation*}
\max_{x\in\mA, y \geq 0} L_{sys}(x,y;b)
=\sum_{i\in\mI}   U^*_i(b_i) +  \sum_{i\in\mI} b_i y_i(b)
. 
\end{equation*}
The Lagrangian dual of the SYSTEM problem~\eqref{SYS} can thus be written
as follows.  \newline

\noindent\textbf{DUAL}($U^*$,$y$, $\mI$)
\begin{subequations} \label{SYSdual}
\begin{align} 
&\rm{Minimize}  &&\sum_{i\in\mI}  \left( U^*_i(b_i) + 
b_i y_i(b) \right) &\\
&\rm{over} && b_i \geq 0,  \quad i\in\mI. &
\end{align}
\end{subequations}
Owing to the size of the type space $\mT$, the
optimization~\eqref{SYS} has a potentially uncountable number of
constraints. This presents certain technical difficulties, for instance
those associated with  proving strong duality. These issues are dealt with
in the appendix, where the proofs of the following two propositions  are presented. 

We first observe that  the SYSTEM problem decomposes into optimizations
relevant to the advertisers and to the platform.

\begin{proposition}[Decomposition]\label{Decomp1}
Variables $\tilde{y}$, $\tilde{x}^\tau, \tau\in\mT$, satisfying the
feasibility conditions~(\ref{SYS2}-\ref{SYS5})
are optimal for
{SYSTEM}($U$,$\mI$,$\bP_\tau$) if and only if there exist $\tilde{b}_i$,
$i\in\mI$, such that
 \begin{enumerate}[A.]
\item $\tilde{b}_i$ minimizes $U^*_i(b_i) +  b_i
\tilde{y}_i$ over $b_i \geq 0$, for each $i \in \mI$, 
\item $\tilde{x}^\tau$ solves {ASSIGNMENT}($\tau, \tilde{b}$) with
probability one under the distribution $\bP_\tau$ over $\tau\in\mT$. 
\end{enumerate}
\end{proposition}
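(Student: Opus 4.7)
The plan is to prove the proposition via Lagrangian duality applied to SYSTEM, leveraging the separability~\eqref{L1sep} already displayed by the authors. Since SYSTEM is a concave maximization (the $U_i$ are strictly concave, the feasible region is convex) in which only the $|\mI|$ average-rate constraints~\eqref{SYS2} have been dualized, strong duality will furnish a multiplier $\tilde b \in \mathbb{R}_+^{|\mI|}$ whose defining optimality conditions are exactly (A) and (B).

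First I would invoke strong duality to produce $\tilde b$ such that an optimal $(\tilde x, \tilde y)$ maximizes $L_{sys}(\cdot,\cdot;\tilde b)$ over $x\in\mA$, $y\geq 0$, and conversely. This is the technical crux, because the scheduling constraints~\eqref{SYS3}--\eqref{SYS4} live on an uncountable index set $\mT$; I would handle it by treating $x:\tau\mapsto x^\tau\in\mS$ as a measurable policy, viewing SYSTEM as a convex program with finitely many dualized linear constraints, and verifying a Slater-type condition (immediate since the origin is interior to the feasible $y$-set and each $\mS$ is non-empty). Crucially, the scheduling constraints remain inside the set $\mA$ and are never dualized, so no infinite-dimensional multiplier appears; this is precisely what makes the decomposition~\eqref{L1sep} available. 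The details are those relegated to the appendix.

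Granted strong duality, sufficiency follows from~\eqref{L1sep}. Suppose $(\tilde x,\tilde y)$ is feasible and some $\tilde b$ satisfies (A) and (B). The map $b_i\mapsto U^*_i(b_i)+b_i\tilde y_i$ is strictly convex with derivative $-D_i(b_i)+\tilde y_i$ by~\eqref{DU}, so (A) is equivalent to $D_i(\tilde b_i)=\tilde y_i$, i.e.\ $U'_i(\tilde y_i)=\tilde b_i$, which in turn means $\tilde y_i$ attains the maximum in the definition~\eqref{USER} of $U^*_i(\tilde b_i)$. Hence $U_i(\tilde y_i)-\tilde b_i\tilde y_i=U^*_i(\tilde b_i)$. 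Meanwhile (B) ensures $\tilde x^\tau$ attains the maximum inside the expectation~\eqref{L1sep2}. Combining, and using feasibility~\eqref{SYS2} to substitute $\tilde y_i=\bE_\tau[\sum_l p^\tau_{il}\tilde x^\tau_{il}]$,
\begin{equation*}
\sum_{i\in\mI} U_i(\tilde y_i) \;=\; \sum_{i\in\mI} U^*_i(\tilde b_i) + \sum_{i\in\mI}\tilde b_i\tilde y_i \;=\; \max_{x\in\mA,y\geq 0} L_{sys}(x,y;\tilde b),
\end{equation*}
which by weak duality matches the SYSTEM optimum, so $(\tilde x,\tilde y)$ is optimal.

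Necessity is the reverse: if $(\tilde x,\tilde y)$ is SYSTEM-optimal, strong duality delivers $\tilde b\geq 0$ at which $\sum_i U_i(\tilde y_i)$ equals $\max_{x,y} L_{sys}(x,y;\tilde b)$. The separability~\eqref{L1sep} and feasibility~\eqref{SYS2} then force $\tilde y_i$ to achieve $U^*_i(\tilde b_i)$ (giving $U'_i(\tilde y_i)=\tilde b_i$, which is (A)) and force $\tilde x^\tau$ to solve ASSIGNMENT$(\tau,\tilde b)$ for $\bP_\tau$-almost every $\tau$ (which is (B)); the almost-sure qualifier is unavoidable because $\tilde x$ only needs to attain the maximum inside the expectation~\eqref{L1sep2}. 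The main obstacle, then, is the first step: justifying strong duality and the measurable interchange of $\bE_\tau$ with the pointwise maximum over $\mS$ in an infinite-dimensional setting. Once these are in hand, the proposition reduces to reading off KKT stationarity from the separated Lagrangian.
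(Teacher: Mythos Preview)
Your proposal is correct and follows essentially the same approach as the paper: dualize only the finitely many average-rate constraints~\eqref{SYS2}, exploit the separability~\eqref{L1sep} of $L_{sys}$, translate Condition~A into $U_i'(\tilde y_i)=\tilde b_i$ via Fenchel duality, and use strong duality (relegated to the appendix) for the converse direction. The only minor deviation is that you gesture at a Slater condition for strong duality, whereas the paper's appendix (Theorem~\ref{AppendixTheorem}) gives a direct supporting-hyperplane argument tailored to the infinite-constraint structure; this is a technical detail rather than a different route.
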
 

In this proposition, the optimization in Condition A does not naturally correspond to the bidding behavior of strategic advertisers, at least in its present form.   
Hence we need to examine the  implications of Condition A for the
construction of prices \eqref{bidpriceprimal1} that do  give strategic advertisers the incentive to solve the SYSTEM problem. We do this in 
the next section, Section \ref{Mech}. There we shall also see that 
the per-click pricing implementations \eqref{chargecalc} and
\eqref{chargecalc2} are made possible by the decomposition
into per-impression assignments, Condition  B.

The optimal bids $\tilde{b}$
can be further understood through the following dual characterization.

\begin{proposition}[Dual Optimality]\label{NashProp}$\;$\\
a)  The objective of the dual problem~\eqref{SYSdual} is continuously
differentiable for $b>0$ and is minimized uniquely by the 
positive vector $\tilde{b}=(\tilde{b}_i: i\in\mI)$ satisfying, for each $i\in\mI$,
\begin{equation}\label{DualOptCond}
\frac{d U^*_i}{d b_i}(\tilde{b}_i) + {y}_i(\tilde{b})=0. 
\end{equation}
\noindent b) If $\tilde{b}$ is an optimal solution to the DUAL problem~\eqref{SYSdual}  then ${x}^{\tau}(\tilde{b})$, ${y}(\tilde{b})$ are optimal for the SYSTEM problem \eqref{SYS}.
\end{proposition}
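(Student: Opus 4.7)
My plan is to prove part (a) in three steps: (i) show the dual objective is continuously differentiable on $\{b>0\}$ and compute its gradient; (ii) read off that interior critical points are exactly the solutions of \eqref{DualOptCond}; and (iii) use Proposition \ref{Decomp1} together with strict concavity of the $U_i$ to obtain existence and uniqueness of a positive minimizer. Part (b) will then fall out: given any dual minimizer $\tilde b$, I will verify that the primal variables induced by solving ASSIGNMENT$(\tau,\tilde b)$ satisfy Conditions A and B of Proposition \ref{Decomp1}.

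\textbf{Differentiability and the FOC.} From representation \eqref{DU} we have $U^*_i\in C^1$ with $\frac{d U^*_i}{d b_i}(b_i)=-D_i(b_i)$, and $D_i$ is continuous and strictly decreasing because $U_i$ is $C^1$ and strictly concave. For the remaining piece I would reuse the identity used to derive the parametrized VCG price,
\[
\int_0^{b_i} y_i(b'_i,b_{-i})\,db'_i \;=\; \sum_j b_j y_j(b)\,-\,\sum_{j\neq i} b_j y_j(0,b_{-i}),
\]
which is exactly Proposition \ref{diffsumyprop} of Appendix \ref{Lemma1}. Differentiating in $b_i$ yields $\frac{\partial}{\partial b_i}\sum_j b_j y_j(b)=y_i(b)$, and this is continuous in $b$ by the monotonicity property. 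Combining, the dual objective is $C^1$ on $\{b>0\}$ with partial derivative $-D_i(b_i)+y_i(b)$, so interior stationarity is exactly \eqref{DualOptCond}.

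\textbf{Existence and uniqueness of $\tilde b$.} Existence I will inherit from Proposition \ref{Decomp1}: let $(\tilde x,\tilde y)$ be a primal optimum (uniqueness of $\tilde y$ coming from strict concavity of the $U_i$) and let $\tilde b$ be the multipliers the proposition supplies. Condition A forces $\tilde b_i = U'_i(\tilde y_i)$, which is positive and finite: the boundary assumption $U'_i(y_i)\to\infty$ as $y_i\downarrow 0$ rules out $\tilde y_i=0$ once advertiser $i$ has some positive click-through supply, and $U'_i(y_i)\to 0$ as $y_i\to\infty$ rules out $\tilde b_i=0$ and also $\tilde b_i=\infty$. Condition B gives $\tilde y_i = y_i(\tilde b)$, so $\tilde b$ satisfies \eqref{DualOptCond}. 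For uniqueness, any positive minimizer must satisfy the FOC, so setting $\tilde y_i:=y_i(\tilde b)$ makes $(\tilde x(\tilde b),\tilde y)$ satisfy both conditions of Proposition \ref{Decomp1}; by uniqueness of the primal optimum $\tilde y$, the bids $\tilde b_i=U'_i(\tilde y_i)$ are then pinned down. The main obstacle is that the dual objective is not obviously jointly convex in $b$ (the coupling terms $b_j y_j(b)$ are not convex a priori), so I avoid trying to get uniqueness from convex analysis directly and instead route through the primal uniqueness already established in Proposition \ref{Decomp1}.

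\textbf{Part (b).} Given any dual minimizer $\tilde b$, set $\tilde y_i:=y_i(\tilde b)$ and $\tilde x^\tau:=x^\tau(\tilde b)$. Feasibility with respect to \eqref{SYS2}--\eqref{SYS5} is immediate from the definition of $y_i(b)$ and the fact that each $x^\tau(\tilde b)$ solves ASSIGNMENT$(\tau,\tilde b)$, so Condition B of Proposition \ref{Decomp1} holds. For Condition A, the function $b_i\mapsto U^*_i(b_i)+b_i\tilde y_i$ is convex and $C^1$ with derivative $-D_i(b_i)+\tilde y_i$, which vanishes at $\tilde b_i$ by the FOC established in part (a); hence $\tilde b_i$ is its unique minimizer. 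Proposition \ref{Decomp1} then concludes that $(\tilde x,\tilde y)$ is optimal for SYSTEM.
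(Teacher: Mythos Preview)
Your differentiability computation and your Part (b) are fine; in fact your Part (b), verifying Conditions A and B of Proposition~\ref{Decomp1} directly for the dual minimizer, is arguably cleaner than the paper's route through strong duality.

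The gap is in Part (a), and it stems from a mistaken premise. You write that ``the dual objective is not obviously jointly convex in $b$ (the coupling terms $b_j y_j(b)$ are not convex a priori)'' and therefore route existence and uniqueness through Proposition~\ref{Decomp1}. But the sum $\sum_i b_i y_i(b)$ \emph{is} convex in $b$: it equals
\[
\bE_\tau\Bigl[\max_{x^\tau\in\mS}\sum_{i,l} b_i\,p^\tau_{il}\,x^\tau_{il}\Bigr],
\]
an expectation of pointwise maxima of linear functions of $b$. This is exactly what Lemma~\ref{diffsumyprop} records, and together with the strict convexity of each $U^*_i$ it makes the dual strictly convex. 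The paper then gets existence from coercivity ($U^*_i\ge 0$ and $\sum_i b_i y_i(b)\to\infty$ as $\|b\|\to\infty$, also in Lemma~\ref{diffsumyprop}), interiority from $D_i(0)=\infty$, and uniqueness from strict convexity.

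Because you bypass convexity, your ``existence'' step only produces a point $\tilde b>0$ satisfying the first-order condition~\eqref{DualOptCond}; you never argue that this stationary point is actually a \emph{minimizer} of the dual. Without convexity a stationary point could in principle be a saddle. Your uniqueness argument likewise shows only that positive minimizers, if any, are unique---it does not establish that one exists or that no boundary minimizer occurs. The quickest repair is simply to restore the convexity observation above. Alternatively, if you want to stay on your route, add the one missing line: at the $\tilde b$ furnished by Proposition~\ref{Decomp1} one has $U^*_i(\tilde b_i)+\tilde b_i\tilde y_i=U_i(\tilde y_i)$ (since $U'_i(\tilde y_i)=\tilde b_i$), so the dual value at $\tilde b$ equals the primal optimum, and weak duality then forces $\tilde b$ to be the dual minimizer.
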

The dual provides a finite parameter optimization from which the SYSTEM
problem can be solved. Moreover, \eqref{DualOptCond} provides conditions on
advertiser demands which, to solve the SYSTEM problem, must be effected by the auction system in strategic form.


\section{Mechanism Design} \label{Mech}
We now prove that our mechanism implements our system optimization.
In the last section we demonstrated how this global
problem can be decomposed into two types of sub-problem: one, where the
platform finds an optimal assignment given click-through probabilities; 
and the other, where the dual variables $b$ are each set to 
 solve a certain single parameter dual problem. 
In this section we suppose the advertisers act strategically, anticipating
the result of the platform's assignment and attempting
to maximize their payoff.

Henceforth $b_i$ is the
\textit{bid}
submitted by advertiser $i$ and, as a function of these bids, we formulate
prices 
that incentivize the advertisers to choose bids that 
result in an assignment that solves 
the SYSTEM problem~\eqref{SYS}.

Consider a mechanism where, given the vector of bids
$b=(b_i:i\in\mI)$, each advertiser, $i$,  receives a
click-through rate ${y}_i(b)$, and from this derives a benefit
$U_i({y}_i(b))$ and is charged an expected price $\pi_i(b)$
per click. The payoff to advertiser $i$ arising from 
a vector of bids 
$b=(b_i : i\in\mI)$ is then 
\begin{equation}\label{rewardsprimal}
{u}_i(b) = U_i ({y}_i(b))  - \pi_i(b) {y}_i(b).
\end{equation}
A 
\textit{Nash equilibrium} is a vector of bids 
$b^*=(b^*_i : i\in\mI)$ such that, for $i\in\mI$  and all
$b_i$
\begin{equation}\label{Nashr}
{u}_i(b^*) \geq {u}_i ( b_i , b_{-i}^*). 
\end{equation}
Here $( b_i , b_{-i}^*)$ is  obtained from 
 the vector $b^*$ by replacing the $i$th component by  $b_i$.

The main result of this section is the following.

\begin{theorem}\label{NashTheorem}
If prices are charged so that the expected
rate of payment by advertiser $i$, for $i \in \mI$, is given
by expression~\eqref{bidpriceprimal1} 
then there exists a unique Nash equilibrium, and it is 
given by the vector of optimal prices identified in 
Proposition~\ref{NashProp}. Thus the assignments achieved at the Nash
equilibrium, $x^\tau(b^*), y(b^*)$, form a solution to the 
SYSTEM problem~\eqref{SYS}. 
\end{theorem}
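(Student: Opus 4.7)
The plan is to show that the pricing rule \eqref{bidpriceprimal1} makes the advertisers' best-response condition coincide exactly with the dual optimality condition \eqref{DualOptCond}, so that Proposition~\ref{NashProp} finishes the job.

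First I would substitute the pricing formula into the pay-off \eqref{rewardsprimal} to obtain the clean expression
\[
u_i(b_i,b_{-i}) \;=\; U_i(y_i(b_i,b_{-i})) - b_i\,y_i(b_i,b_{-i}) + \int_0^{b_i} y_i(b',b_{-i})\,db',
\]
valid because $\pi_i(b) y_i(b)$ equals $b_i y_i(b)$ minus the integral term. The first two terms already have the familiar consumer-surplus shape, and the integral term is what the mechanism ``returns'' to the advertiser.

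Next I would characterize the best response to a fixed $b_{-i}>0$. Define $\hat{b}_i$ as the unique solution of $U'_i(y_i(b_i,b_{-i})) = b_i$, equivalently $y_i(b_i,b_{-i}) = D_i(b_i)$. Existence and uniqueness follow from the monotonicity property (the map $b_i\mapsto y_i(b_i,b_{-i})$ is strictly increasing and continuous, starts at $0$, while $D_i$ is strictly decreasing, continuous, with $D_i(0^+)=\infty$ and $D_i(\infty)=0$), so the intermediate value theorem applies to $b_i\mapsto y_i(b_i,b_{-i})-D_i(b_i)$. To prove optimality of $\hat{b}_i$ without assuming differentiability of $y_i$, I would apply the tangent-line inequality for the concave $U_i$ at the point $\hat{y}_i := y_i(\hat{b}_i,b_{-i})$, whose slope is $U'_i(\hat{y}_i) = \hat{b}_i$:
\[
U_i(y) - U_i(\hat{y}_i) \;\leq\; \hat{b}_i\,(y - \hat{y}_i).
\]
Writing $y = y_i(b_i,b_{-i})$ and computing $u_i(b_i,b_{-i}) - u_i(\hat{b}_i,b_{-i})$, the quadratic terms in $\hat{b}_i,y,\hat{y}_i$ cancel and one is left with
\[
u_i(b_i,b_{-i}) - u_i(\hat{b}_i,b_{-i}) \;\leq\; \int_{\hat{b}_i}^{b_i}\bigl(y_i(b',b_{-i})-y_i(b_i,b_{-i})\bigr)\,db',
\]
which is $\leq 0$ for either sign of $b_i-\hat{b}_i$ by monotonicity of $y_i(\cdot,b_{-i})$. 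This gives the key inequality $u_i(b_i,b_{-i}) \leq u_i(\hat{b}_i,b_{-i})$, so $\hat{b}_i$ is a global best response. The main obstacle here is precisely this step: avoiding any calculus on $y_i$ (which the monotonicity property does not supply) and instead leveraging concavity of $U_i$ together with monotonicity of $y_i$ to produce a sign-definite integral remainder.

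Finally I would translate the collection of best-response equations into dual optimality. Using \eqref{DU}, $\tfrac{dU^*_i}{db_i}(b_i) = -D_i(b_i)$, so the equation $y_i(b^*_i,b^*_{-i}) = D_i(b^*_i)$ characterising a best response at $b^*$ is exactly \eqref{DualOptCond}. By Proposition~\ref{NashProp}(a) this system admits a unique positive solution $\tilde{b}$, which is therefore the unique Nash equilibrium; by Proposition~\ref{NashProp}(b) the induced $x^\tau(\tilde{b})$ and $y(\tilde{b})$ solve \text{SYSTEM}($U,\mI,\bP_\tau$), completing the proof. A brief remark would note that boundary cases ($b^*_i=0$ for some $i$) are excluded because $U'_i(0^+)=\infty$ forces the best response to be strictly positive.
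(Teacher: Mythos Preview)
Your argument is correct and takes a genuinely different route from the paper's. The paper establishes Proposition~\ref{riProp} via Fenchel duality: it introduces $P_i^*(b_i)=\int_0^{b_i} y_i(b',b_{-i})\,db'$, invokes the Fenchel--Moreau and Fenchel duality theorems to obtain $\max_{y_i}\{U_i-P_i\}=\min_{b_i}\{U_i^*+P_i^*\}$, and then identifies both optima with the condition $U'_i(y_i(b))=b_i$. You bypass this machinery entirely: having written $u_i(b_i,b_{-i})=U_i(y_i)-b_iy_i+\int_0^{b_i}y_i(b',b_{-i})\,db'$, you use only the tangent inequality for the concave $U_i$ at $\hat y_i$ together with monotonicity of $b'\mapsto y_i(b',b_{-i})$ to obtain the sign-definite remainder $\int_{\hat b_i}^{b_i}(y_i(b',b_{-i})-y_i(b_i,b_{-i}))\,db'\le 0$. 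This is more elementary and makes transparent exactly which assumptions (strict concavity of $U_i$, strict monotonicity and continuity of $y_i$) are doing the work; the paper's approach, in exchange, highlights the conjugate-duality structure linking the advertiser problem to the dual~\eqref{SYSdual}.

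One point to make explicit: you show $u_i(b_i,b_{-i})\le u_i(\hat b_i,b_{-i})$ and then speak of the equation $y_i(b^*)=D_i(b_i^*)$ as ``characterising'' a best response. For this you need the inequality to be \emph{strict} whenever $b_i\neq\hat b_i$. That follows immediately: strict concavity of $U_i$ makes the tangent inequality strict when $y_i(b_i,b_{-i})\neq\hat y_i$, and strict monotonicity of $y_i(\cdot,b_{-i})$ ensures this holds for $b_i\neq\hat b_i$. With that sentence added, uniqueness of the best response---and hence of the Nash equilibrium via Proposition~\ref{NashProp}(a)---is fully justified.
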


The result states that, given adverts are assigned according to the
assignment problem \eqref{ASMT}, the game theoretic equilibrium reached by
advertisers attempting to maximize their respective payoffs ${u}_i$
solves the problem {SYSTEM}($U$,$\mI$,$\bP_\tau$). 
Since ${y}_i(b'_i,b_{-i})$ is a strictly increasing
function of the bid $b'_i$, it follows from~\eqref{bidpriceprimal1} that  the price $\pi_i(b)$ must be strictly
lower than the bid $b_i$. Setting a price lower than the submitted
bid is a prevalent feature of online auctions used by search engines, and,
as we emphasized in Section \ref{sec:assign}, the prices
\eqref{bidpriceprimal1} can be practically implemented in a sponsored search setting.

{We note that, in this section, each advertiser expresses their preferences through a single bid. This framework extends naturally to the case where advertisers place multiple bids over multiple different keywords (or search categories). This extension is given in Section \ref{sec:ext}.}

\subsection{Proof of Theorem \ref{NashTheorem}}
\label{sec:proof-Nash}

To establish Theorem~\ref{NashTheorem} 
we will require 
an additional result, Proposition \ref{riProp}, which indicates
how 
maximal {payoffs } achieved by each advertiser relate to the solution of the
dual problem, given by Proposition \ref{NashProp} from the previous
section.

\begin{proposition}[Mechanism Dual]\label{riProp} For each positive choice
of $b_{-i}=(b_{j} : j\neq i, j\in\mI)$, the following equality holds
\begin{align}
&\max_{b_i\geq 0} {u}_i(b) = \min_{b_i\geq 0} 
	\left\{ 
		U^*_i(b_i) + \int_0^{b_i} {y_i}(b'_i,b_{-i}) db'_i
	\right\}. \label{rewards4}
\end{align}
Moreover, the optimizing $b_i$ for both expressions is the same,
is unique and finite, and satisfies 
\begin{equation} \label{DualOptCond2}
\frac{d }{d b_i} U^*_i(b_i) + {y}_i(b) = 0.
\end{equation}
\end{proposition}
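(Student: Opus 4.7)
The plan is a weak--strong duality argument: introduce $\phi_i(b_i) := U^*_i(b_i) + \int_0^{b_i} y_i(b'_i, b_{-i})\, db'_i$ as a natural dual to $u_i$ and show that both are extremized at the same $\tilde{b}_i$, the unique solution of \eqref{DualOptCond2}.

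First, I would substitute the price formula \eqref{bidpriceprimal1} into the pay-off definition \eqref{rewardsprimal} to obtain
$u_i(b) = U_i(y_i(b)) - b_i y_i(b) + \int_0^{b_i} y_i(b'_i,b_{-i})\, db'_i.$
Because $U^*_i(b_i) \geq U_i(y) - b_i y$ for every $y \geq 0$, with equality exactly when $y = D_i(b_i)$, this gives the weak duality bound $u_i(b) \leq \phi_i(b_i)$ for all $b_i \geq 0$, with equality if and only if \eqref{DualOptCond2} holds.

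Next I would prove existence and uniqueness of a finite $\tilde{b}_i > 0$ satisfying \eqref{DualOptCond2} and show it is the minimizer of $\phi_i$. Using $\tfrac{d}{db_i} U^*_i = -D_i$ together with the fundamental theorem of calculus, $\phi_i$ is continuously differentiable with $\phi_i'(b_i) = -D_i(b_i) + y_i(b_i, b_{-i})$. The standing assumptions on $U_i$ make $D_i$ continuous and strictly decreasing with $D_i(0^+)=\infty$ and $D_i(\infty)=0$, while the monotonicity property makes $y_i(\cdot, b_{-i})$ continuous and strictly increasing from $0$. An intermediate-value argument therefore yields a unique $\tilde{b}_i > 0$ with $\phi_i'(\tilde{b}_i) = 0$, and $\phi_i'$ changes sign from negative to positive there, so $\tilde{b}_i$ is the unique minimizer of $\phi_i$. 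At $\tilde{b}_i$ the weak-duality gap vanishes, giving $u_i(\tilde{b}_i) = \phi_i(\tilde{b}_i) = \min \phi_i$, so $\max u_i \geq \min \phi_i$.

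The one genuinely tricky step is showing $\tilde{b}_i$ is in fact the unique \emph{maximizer} of $u_i$, not just a point attaining one pay-off value; the difficulty is that $y_i(\cdot,b_{-i})$ is only continuous and strictly increasing, not necessarily differentiable, so I cannot simply set $\partial u_i/\partial b_i = 0$. Instead I would invoke concavity of $U_i$ via the supporting hyperplane at $y_i(\tilde{b}_i)$ whose slope equals $\tilde{b}_i$ by \eqref{DualOptCond2}, giving
$u_i(b) - u_i(\tilde{b}_i) \leq (\tilde{b}_i - b_i)\, y_i(b) + \int_{\tilde{b}_i}^{b_i} y_i(b'_i,b_{-i})\, db'_i,$
and then bound the integral of the monotone $y_i(\cdot, b_{-i})$ by its endpoint value $y_i(b)$ in each of the cases $b_i \gtrless \tilde{b}_i$ to conclude that the right-hand side is $\leq 0$, with strict inequality whenever $b_i \neq \tilde{b}_i$. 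Combining with weak duality gives $\max u_i = u_i(\tilde{b}_i) = \min \phi_i$, and the common optimizer $\tilde{b}_i$ is the unique finite positive solution of \eqref{DualOptCond2}, as claimed.
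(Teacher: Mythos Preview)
Your proof is correct and complete. It differs from the paper's in its machinery, though the underlying idea is the same. The paper introduces $P_i^*(b_i) = \int_0^{b_i} y_i(b'_i,b_{-i})\,db'_i$, invokes Fenchel's Duality Theorem to obtain $\max_{y_i}\{U_i(y_i)-P_i(y_i)\} = \min_{b_i}\{U^*_i(b_i)+P^*_i(b_i)\}$, and then computes the conjugate $P_i$ via Fenchel--Moreau, checking that $P_i(y_i(b)) = \pi_i(b)y_i(b)$; uniqueness of the common optimizer is deduced by noting that the right-hand side is strictly convex and that its first-order condition \eqref{DualOptCond2} is equivalent, via the inverse relationship between $U'_i$ and $(U^*_i)'$, to the first-order condition $U'_i(y_i(b)) = b_i$ for the strictly concave primal.

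Your route is more elementary: you bypass the Fenchel and Fenchel--Moreau theorems entirely by writing the pointwise Young inequality $u_i(b_i)\leq\phi_i(b_i)$ directly, locating the unique stationary point $\tilde b_i$ of $\phi_i$ by an intermediate-value argument on $\phi_i' = -D_i + y_i$, and then handling the non-differentiability of $b_i\mapsto y_i(b_i,b_{-i})$ with a supporting-hyperplane estimate at $y_i(\tilde b_i)$ combined with the monotone integral bound. This avoids the change of variable from $b_i$ to $y_i$ that the paper uses (where one must argue that the range of $y_i\mapsto P_i(y_i)$ coincides with that of $b_i\mapsto P_i(y_i(b))$), and it makes the role of strict concavity of $U_i$ and strict monotonicity of $y_i$ in forcing uniqueness completely explicit. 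The paper's approach, on the other hand, situates the result within standard convex-duality language and makes the appearance of the integral price $P_i^*$ as a conjugate function transparent.
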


\begin{proof}
We calculate the conjugate dual of the {payoff } function \eqref{rewardsprimal}. Let
$P_i(y_i)$ be the function whose Legendre-Fenchel transform is 
\begin{equation*}
P_i^*(b_i)=\int_0^{b_i} {y_i}(b'_i,b_{-i}) db'_i.
\end{equation*}
The above function is increasing and convex, and we know from Fenchel\rq{}s Duality Theorem \cite[Theorem 3.3.5]{borwein2006convex} that
\begin{equation} \label{FDT2}
\max_{y_i\geq 0}
	\left\{ 
		U_i(y_i) - P_i(y_i) 
	\right\}
=
\min_{b_i \geq 0} 
	\left\{ 
		U^*_i(b_i) + P_i^*(b_i) 
	\right\}.
\end{equation}
Next we 
calculate the function $P_i$ from the dual of the function $P^*_i$ above.
By the Fenchel--Moreau Theorem, \cite{borwein2006convex},  we know this to be
\begin{equation*} 
P_i(y_i)
 =
\min_{b_i\geq 0} 
	\left\{ 	
		b_i y_i  -\int_0^{b_i} 	{y_i}(b'_i,b_{-i}) db'_i 
	\right\}.
\end{equation*}
The optimum in this expression occurs when
${y}_i(b) = y_i$.
Substituting this back, since $b_i\mapsto {y}_i(b)$ is strictly increasing, we have that
\begin{align}\label{Peqn}
P_i(y_i) 
& =
	\int_0^{\infty} \left( y_i - {y}_i(b'_i,b_{-i})\right) \bI [ {y}_i(b'_i,b_{-i}) \leq  y_i ] db'_i.
\end{align}
In other words, as expected with the Legendre-Fenchel transform, the area under the curve ${y}_i(b_i, b_{-i})$ is converted to the area to the left of the curve ${y}_i(b_i, b_{-i})$.
Further, notice, if $y_i > \max_{b_i} {y}_i(b_i,b_{-i})$ then
$P_i(y_i) = \infty$, 
and thus the finite range of the function $y_i\mapsto P_i(y_i)$ is exactly
the same as that of $b_i \mapsto  P_i(y_i(b))$.
Noting \eqref{Peqn} and this last observation, the equality \eqref{FDT2} now reads
\begin{align*}
\min_{b_i\geq 0} 
	\left\{ 
		U^*_i(b_i) + \int_0^{b_i} {y}(b'_i,b_{-i}) db'_i
	\right\} 
 = &\max_{y_i \geq 0 } \left\{ U_i(y_i) -P_i(y_i)  \right\} \\
 = & \max_{b_i \geq 0} \left\{ U_i({y}_i(b)) -P_i({y}_i(b))  \right\} \\
= & \max_{b_i \geq 0} \left\{ U_i({y}_i(b)) -\pi_i(b) {y}_i(b)  \right\}. 
\end{align*}
In the final equality we note from the definition~\eqref{bidpriceprimal1}
that 
$P_i({y}_i(b)) = \pi_i(b) {y}_i(b)$. 
This gives the equality \eqref{rewards4}.

We now show that both expressions \eqref{rewards4} are determined at the
same unique value of $b_i$. The function $U^*_i(b_i)-P^*_i(b)$ is a
strictly convex differentiable function of $b_i$, whose unique minimum is
given by the required expression~\eqref{DualOptCond2}. 
Further, $b_i\mapsto y_i(b_i,b_{-i})$ is strictly increasing and $b_i$ achieves the range of the strictly concave function $U_i(y_i)-P_i(y_i)$ under $y_i=y_i(b_i,b_{-i})$.  Thus  $U_i(y_i)-P_i(y_i)$ is maximized uniquely by $y_i=y_i(b_i,b_{-i})$ (and thus uniquely by $b_i$) satisfying
\begin{equation}\label{Ucondy2}
\frac{d}{dy_i} U_i(y_i(b)) - b_i = 0.
\end{equation}
Since $\frac{d}{db_i}U^*_i$ is the inverse of the strictly increasing
function $\frac{d}{dy_i}U_i$, it is clear 
that \eqref{DualOptCond2} and \eqref{Ucondy2} are equivalent and satisfied by the same unique $b_i$. This completes the proof.
  \end{proof}

The proof of Theorem \ref{NashTheorem} follows by observing the optimality conditions of Propositions \ref{NashProp} and \ref{riProp}.

\begin{proof}[Proof of Theorem \ref{NashTheorem}]
Before proceeding with the main argument, we note that a Nash equilibrium
must be achieved by positive values of $b_i$. By applying the mean
value theorem, for some $\tilde{y}$ satisfying $0=y_i(0,b_{-i}) \leq
\tilde{y} \leq y_i(b_i,b_{-i})$, we have
\begin{align}
{u}_i(b_i,b_{-i}) &\geq U_i (0) + U_i'(\tilde{y}) (y_i(b_i,b_{-i}) - y_i(0,b_{-i}) ) - \int_0^{b_i} (y_i(b_i,b_{-i}) - y_i(0,b_{-i})) db'_i \notag \\
&= {u}_i(0,b_{-i}) + (U'_i(\tilde{y})-b_i )( y_i(b_i,b_{-i}) - y_i(0,b_{-i})) \label{AB}\\
& > {u}_i(0,b_{-i}). \notag
\end{align}
The second term in \eqref{AB} is positive for $b_i$ sufficiently small, since $U'_i(\tilde{y})-b_i\nearrow \infty$ as $b_i\searrow 0$ and from our monotonicity property $y_i(b_i,b_{-i}) > y_i(0,b_{-i})$. From this we see that a Nash equilibrium can only be achieved with $b_i>0$ for each $i\in\mI$.

By  Proposition \ref{riProp}, $b=(b_i:i\in\mI)>0$ 
is a Nash equilibrium 
if and only if condition~\eqref{DualOptCond2} is satisfied
for each $i\in\mI$. But by Proposition \ref{NashProp}b), these
conditions hold if and only if 
$b$ is the unique solution to the dual to the SYSTEM problem. So, the set of
Nash equilibria are the optimal prices defined for the decomposition,
Proposition~\ref{Decomp1}. By Proposition~\ref{NashProp}b), the assignment
achieved by Nash equilibrium bids maximizes the utilitarian objective
{SYSTEM}($U$, $\mI$, $\bP_\tau$). Finally, by Strong Duality  (Theorem
\ref{AppendixTheorem} of Appendix~\ref{sec:Lagrange}), there exists $b^*$ which optimizes the dual
problem \eqref{SYSdual}, and thus there must be a Nash equilibrium.
  \end{proof}

\begin{remark}
The optimality condition \eqref{DualOptCond} or  \eqref{DualOptCond2} 
states that each advertiser\rq{}s demand, $D_i(b_i)$,
and supply, ${y}_i(b)$, should equate, and is a consequence
of the Envelope Theorem. 
A more familiar context for this form of result is 
Vickrey pricing~(\cite{Vick61}) and  
Myerson\rq{}s Lemma (or the Revenue Equivalence Theorem), see
\cite{myerson1981optimal} and \cite[Theorem 3.3]{milgrom2004putting},
which are also  consequences of the Envelope Theorem. 
But observe that we are using general utilities, which despite the single input parameter $b_i$, takes us out of a single parameter type space to which Myerson's Lemma generally applies.

\end{remark}

We have assumed throughout
the monotonicity property, ensuring that the mapping $b_i\mapsto {y}_i(b_i, b_{-i})$
is strictly increasing and continuous.
A natural question concerns
whether the monotonicity property can be relaxed. 

\begin{example}\label{FrankExample}
If the mapping is discontinuous, there may be inefficient Nash
equilibria, and the $L=I=2$ case discussed
in Example~\ref{example:English}, with two advertisers
and two slots, provides an illustration. 
The same difficulty can arise even if the mapping is continuous
but not strictly increasing, as we now show. 
Amend the illustration, by supposing that 
the advertiser effects $q_1^\tau, q_2^\tau$ are independent
random variables with continuous probability density functions
each supported on the interval $(q-\epsilon, q + \epsilon)$ for
$q >> \epsilon >0$.  The mapping $b_i\mapsto {y}_i(b_i, b_{-i})$ is now
continuous, although not strictly increasing.
The inefficient Nash equilibria remain,
where one of the advertisers
bids  very high and the other very low. 
If we assume the densities of $q_1^\tau, q_2^\tau$ are 
positive in a neighbourhood of the origin, then the 
mapping is necessarily strictly increasing, because 
a small increase in an advertiser's bid will have a small
but positive probability of improving the slot allocated
to the advertiser:
competition exists between the advertisers, whatever
their bids, for at least some search types $\tau$, and this
ensures the uniqueness and efficiency of the Nash equilibrium. 
\end{example}


\section{Dynamics and Convergence} \label{Dynamics} 

We have seen in Section~\ref{sec:assign} that our assignment model involves
the rapid solution of a computationally straightforward problem
for each individual search. The challenge facing an advertiser is of
a different form: she has to rely on noisy and possibly delayed
feedback averaged over some period of time in order to learn
the mean click-through rate $y_i$ that has been achieved by
her bid $b_i$, and she then has to decide whether to 
vary her bid. We shall formulate the advertiser's
problem in continuous time,
and the natural question is whether multiple 
advertisers smoothly varying their bids $b_i(t)$ as a consequence
of their current click-through rates $y_i(t)$ will
converge to the Nash equilibrium.

Convergence may not be possible when the search space is discrete, e.g., for an auction on a single search type. 
Essentially, the search engine does not have enough additional
information from the search type $\tau$ to fine tune its discrimination
between advertisers. 
However, in sponsored search, there is inherent variability in the search
type $\tau$ which
will influence the click-through probabilities of the advertiser.
This is the motivation for our assumption of the monotonicity property, 
that the
distribution $\bP_{\tau}$ over $\mT$ 
is such that the click-through rate 
${y}_i(b)$ is a continuous, strictly increasing function of
$b_i$. We shall see that, 
under models of advertiser response, we are then 
able to deduce convergence towards a system optimum.

Recall the objective function for the dual of the system problem as derived in Proposition \ref{NashProp},
\begin{equation} \label{eq:Lyapunov}
\mathcal V(b)  = \sum_{i\in\mI} U_i^*(b_i) +  
\sum_{i\in\mI}  b_i
y_i(b)   . 
\end{equation}
This expression is the
sum of the consumer surpluses and the revenue
achieved by the platform at prices $b$ and,
when $b$ is optimal, it is equal to
the maximal total welfare as defined by the SYSTEM problem \eqref{SYS}.
Further,  $\mathcal V(b)$ is continuously differentiable for $b > 0$ with 
\begin{equation*}
\frac{\partial \mathcal V}{\partial b_i}=
-D_i(b_i) + y_i(b).
\end{equation*}

We next model advertisers' responses to their observation of
click-through rates. 
We suppose advertiser $i$ changes her bid $b_i(t)$ smoothly
(i.e., continuously and differentiably)
as a consequence of her observation of her 
current click-through rate $y_i(t)$
so that
\begin{equation}  \label{eq:dynamics} 
\frac{d}{dt}  b_i(t) \gtrless 0 \text{ according as }
b_i(t) \lessgtr U_i'(y_i(b(t))). 
\end{equation}
This is a natural dynamical system representation of advertiser $i$ varying
$b_i$ smoothly in order to 
improve her payoff ${u}_i(b)$, given by
expression~\eqref{rewardsprimal}, 
 under
prices~\eqref{bidpriceprimal1}, since under the monotonicity
condition a small positive change in $b_i$ will cause a small positive change
in $y_i(b)$ and the impact on ${u}_i(b)$ will be positive
or negative as in relation~\eqref{eq:dynamics} - see Lemma~\ref{Lemma:dyn}
in
Appendix~\ref{sec:dyn}. Note that from
the definition of the demand function $D_i(\cdot)$,
\begin{equation}  \label{eq:equivdynamics}
y_i \lessgtr D_i(b_i)   \text{ according as }
b_i \lessgtr U_i'(y_i).
\end{equation}
The payoff 
${u}_i(b)$ is maximized over $b_i$ when $b_i$ and $U_i'(y_i(b))$ equate, or
equivalently, when $y_i(b)$ and $D_i(b_i)$ equate.

\begin{theorem}[Convergence of Dynamics] \label{CONV}
Starting from any point $b(0)$ in the interior of the positive orthant,
the trajectory $(b(t): t \geq 0)$ of the above dynamical system
converges to a solution of the {DUAL} problem~\eqref{SYSdual}. Thus
${y}(b(t))$, the assignment achieved by the prices $b(t)$,
converges to a solution of the {SYSTEM} problem~\eqref{SYS}.
\end{theorem}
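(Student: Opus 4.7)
The plan is to use the dual objective $\mathcal V$ of \eqref{eq:Lyapunov} as a Lyapunov function for the dynamics \eqref{eq:dynamics}. The gradient is already given in the excerpt:
\[
\frac{\partial \mathcal V}{\partial b_i}(b) = y_i(b) - D_i(b_i).
\]
Since $U'_i$, and hence $D_i=(U'_i)^{-1}$, is strictly decreasing, $y_i(b)-D_i(b_i)$ has the same sign as $b_i-U'_i(y_i(b))$, by the equivalence \eqref{eq:equivdynamics}. The rule \eqref{eq:dynamics} therefore makes $\dot b_i$ and $\partial\mathcal V/\partial b_i$ opposite in sign at every $t$ and every $i$, so
\[
\frac{d}{dt}\mathcal V(b(t))=\sum_{i\in\mI}\frac{\partial \mathcal V}{\partial b_i}(b(t))\,\dot b_i(t)\le 0,
\]
with equality if and only if $\dot b_i(t)=0$ for every $i$, equivalently $\nabla\mathcal V(b(t))=0$.

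Next I would show that $(b(t):t\ge 0)$ stays in a compact subset of the open positive orthant $(0,\infty)^{|\mI|}$, so that the smooth Lyapunov analysis is legitimate. For the lower bound, since $y_i(b)\le 1$ while $D_i(b_i)\to\infty$ as $b_i\downarrow 0$, pick $\delta_i>0$ with $D_i(\delta_i)>1$; then $b_i\le\delta_i$ forces $\dot b_i>0$ via \eqref{eq:equivdynamics}, so $b_i(t)\ge\min\{b_i(0),\delta_i\}>0$ for all $t$. For the upper bound I would combine the Lyapunov descent $\mathcal V(b(t))\le\mathcal V(b(0))$ with the coercivity of $\mathcal V$: with the $b_j(t)$ uniformly bounded below, the monotonicity property yields a uniform lower bound on each $y_i(b(t))$, so $\sum_i b_i y_i(b)\to\infty$ as any component tends to $\infty$ while $\sum_i U^*_i(b_i)\ge 0$; hence the sublevel set $\{b:\mathcal V(b)\le\mathcal V(b(0))\}\cap\prod_i[\delta_i,\infty)$ is bounded.

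With the trajectory confined to a compact set $K\subset(0,\infty)^{|\mI|}$, I would invoke LaSalle's invariance principle. The set $E=\{b\in K:\dot{\mathcal V}(b)=0\}$ consists of the zeros of $\nabla\mathcal V$ in $K$; by Proposition \ref{NashProp}(a) the only such point is the unique minimizer $\tilde b$ of the DUAL problem, so the largest invariant subset of $E$ is $\{\tilde b\}$ and LaSalle gives $b(t)\to\tilde b$. Continuity of $b\mapsto y(b)$ (supplied by the monotonicity property) yields $y(b(t))\to y(\tilde b)$, and Proposition \ref{NashProp}(b) says $y(\tilde b)$ solves the SYSTEM problem.

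The main obstacle I anticipate is the compactness step, specifically the upper bound on $b_i(t)$: a naive pointwise argument (\emph{$\dot b_i<0$ when $b_i$ is large}) fails because $y_i(b)$ could a priori be very small if some other $b_j$ is simultaneously large. The Lyapunov descent is what ties the coordinates together and rules this out, but one must be careful that the lower bounds on the $b_j$ are \textbf{uniform in $t$}, not merely eventual, to legitimately convert them into a uniform positive lower bound on $y_i(b(t))$ and close the coercivity argument.
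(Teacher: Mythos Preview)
Your approach is essentially the paper's: $\mathcal V$ is the Lyapunov function, descent follows exactly as you describe, positivity of the trajectory is argued first (your version, using $y_i(b)\le 1$ to pick a \emph{uniform} $\delta_i$, is in fact cleaner than the paper's, where $\delta$ is allowed to depend on $b_{-i}(t)$), and LaSalle together with Proposition~\ref{NashProp} closes the argument.

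There is, however, one genuine gap in your coercivity step. The monotonicity property says $y_i(b_i,b_{-i})$ is increasing in $b_i$; it says nothing about dependence on $b_{-i}$. If a competitor's bid $b_j\to\infty$ then typically $y_i(b)\to 0$ even with $b_i\ge\delta_i$ fixed, so a uniform positive lower bound on each $y_i(b(t))$ does \emph{not} follow from the uniform lower bounds on the components of $b(t)$. Your last paragraph correctly identifies the difficulty, but the proposed resolution does not work.

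The clean fix, which is how the paper argues (Lemma~\ref{diffsumyprop}), avoids individual $y_i$'s entirely. Since the solution of ASSIGNMENT$(\tau,b)$ is invariant under positive scaling $b\mapsto\lambda b$, the map $b\mapsto\sum_i b_i y_i(b)$ is positively homogeneous of degree one. It is continuous (as the expectation of the optimal value of a linear program whose coefficients are linear in $b$) and strictly positive on $\{b\ge 0:\|b\|=1\}$, since some $b_j>0$ forces $y_j(b)>0$. Compactness of this set then yields $\sum_i b_i y_i(b)\ge c\|b\|$ for some $c>0$ on the entire nonnegative orthant, so $\mathcal V(b)\ge c\|b\|$ and the sublevel sets are bounded without any appeal to the trajectory's lower bounds.
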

\begin{proof}
We prove that the objective of the dual problem $\mV(b)$, defined
above,  is a Lyapunov function for the dynamical system. 
Note that $\mV(b)$ is continuously differentiable for $b>0$.
Since $y_i(b) \downarrow 0$ as $b_i \downarrow 0$ and $U'(0)>0$ 
it follows from~\eqref{eq:dynamics} that
there exists $\delta >0$, possibly depending on $b_{-i}(t)$, such that
$\frac{d }{dt}b_i(t) > 0$ if $b_i(t)\leq \delta$. 
We deduce that the paths of our dynamical system $(b(t):
t \geq 0)$ are strictly positive and $\mV(b(t))$ is continuously 
differentiable along these paths. Further, the level sets $\{ b : \mV
(b) \leq \kappa\}$ are compact: this is an immediate consequence of
the facts that the functions $U_i^*(b_i)$ are positive and decreasing, 
and, as proven in Lemma~\ref{diffsumyprop}, that
\begin{equation*}
\lim_{||b || \rightarrow\infty} \sum_{i\in\mI} b_i{y}_i(b) = \infty.
\end{equation*}

Differentiating $\mV(b(t))$ yields
\begin{align*}
\frac{d}{dt}\mathcal V(b(t))
&= \sum_{i\in\mI} \frac{\partial \mathcal V}{\partial b_i}
\frac{d}{dt} b_i(t) = -\sum_{i\in\mI} \left( D_i(b_i(t)) - y_i(b(t)) \right)
\frac{d}{dt} b_i(t) \leq 0,
\end{align*}
where the inequality follows from relations~\eqref{eq:dynamics}
and~\eqref{eq:equivdynamics}, and is strict unless 
$D_i(b_i(t)) = {y}_i(b(t))$ for $i\in\mI$. 
By Lyapunov\rq{}s Stability Theorem, see \cite[Theorem
4.1]{khalil2002nonlinear}, the process $(b(t) : t\geq 0)$ converges to the
set of points $b^*$ satisfying, for $i\in\mI$, $D_i(b_i) =
{y}_i(b)$. 
Recall that $\tfrac{dU^*_i}{d b_i}=-D_i(b_i)$ and thus, 
by Proposition \ref{NashProp}(a), the price process $b(t)$
converges to an optimal solution to the dual problem \eqref{SYSdual}. 
By the monotonicity property ${y}(b)$ is continuous, and
thus by Proposition \ref{NashProp}(b) 
the click-through rates
${y}(b(t))$ converge to an optimal solution for the system problem.
$\,$\end{proof}

In the above discussion we model advertisers that smoothly change their 
bids over time. However, we remark that other convergence mechanisms could
be considered. For instance, since our dual optimization problem 
is convex and continuously differentiable, we can minimize the dual through a coordinate descent algorithm,
where each component $b_i$ is sequentially minimized. Such an
algorithm could correspond to a game played sequentially with advertisers iteratively maximizing over
$b_i$ their payoff 
${u}_i(b_i,b_{-i})$. Previous work on global convergence to a Nash
equilibrium using an assumption of local rather
than complete information is described by~\cite{YaHa06}.

The dynamical system of this section
allows advertisers' bids to adapt to a non-stationary environment,
for example if the set of participating advertisers changes. 
Note that we have left unexplored the statistical aspects 
of estimating the click-through rates $y_i$,
although some insights
are available from the 
network utility maximization framework for communication
networks (\cite{kelly2003}). In particular, if the period
of time over which the click-through rates $y_i$ are
estimated is longer then this will improve the statistical accuracy 
of the estimation, but will also slow down 
the rate of adaptation to a changing environment; and 
even in a stationary environment there 
is necessarily a trade-off between the speed of
convergence to, and the stochastic variability around, the system optimum.

\begin{remark}
We have assumed
$U_i(\cdot)$ is non-negative, increasing and strictly concave, and is
continuously differentiable with boundary conditions 
$U'_i(y_i) \rightarrow \infty$ as $y_i
\downarrow 0$ and $U'_i(y_i) \rightarrow 0$ as $y_i \uparrow \infty$.
The boundary conditions have 
simplified the statement of results, but
are not critical. 
If we assume only that
$U_i(\cdot)$ is increasing, strictly concave, and
continuously differentiable with $U'_i(0) < U'_i(\infty)$ then
the Lyapunov function~\eqref{eq:Lyapunov} remains strictly
convex on the domain $\{ b:  U'_i(\infty)< b_i < U'_i(0), i \in \mI \}$,
has an interior minimum, and starting from any point $b(0)$
in this domain the trajectory $(b(t): t \geq 0)$
converges to the point $b$ achieving this minimum, which is 
the unique Nash equilibrium.
\end{remark}


\section{General Assignments} \label{sec:genassignments}

In this section we consider how the
assignment problem~\eqref{ASMT} can be generalized
within our framework. Some extensions are immediate
and straightforward. For example, we could allow 
the number of slots $L = L(\tau)$ to depend on the search type $\tau$;
the pricing implementations of Section~\ref{Sec:pricing}
do not require $L(\tau)$ to be constant over $\tau$.
In this Section we consider two further
generalizations of practical importance.

\subsection{More complex page layouts} \label{complexlayouts}

Suppose
the platform allows adverts of different sizes:
for example, an advertiser may wish to offer an
advert that occupies two adjacent slots. More generally
adverts may vary in size, position, and include images and other media.
So the platform
may have a more complex set of possible page layouts
than simply an ordered list of slots $1,2,\ldots,L$.
Let $l \in \mathcal{L}$ describe a possible layout of
the adverts for advertisers $i\in\mI$. Let $p_{il}^\tau$ be
the probability of a click-through to advertiser $i$
under layout $l$. Then the
generalization of the assignment problem~\eqref{ASMT} becomes
\begin{align*}
&\rm{Maximize}  &&\sum_{i\in\mI} b_i p_{il}^\tau &\\
&\rm{over} && l \in \mathcal{L}.&
\end{align*}
Indeed, this formulation allows the click-through probabilities for an
advert
to depend not just on the advertiser and the position within the page,
but also on which other adverts are shown on the page, provided
only  the probabilities $p_{il}^\tau$ can be estimated.

The complexity of this optimization problem depends on the
design of the page layout through the
structure of the set $\mathcal{L}$ and may depend on any structural
information on the probabilities $p_{il}^\tau$,
but for a variety of cases  it will remain an assignment problem
with an efficient solution.
If  $y_i(b)$ is again
defined as the expected click-through rate for advertiser $i$ from
a bid vector $b$, and if it satisfies the monotonicity property, then
 Theorems~\ref{NashTheorem}
and~\ref{CONV} hold with identical proofs.

\begin{example}
In an image-text auction, the platform may place on a page
\emph{either} an ordered set of text adverts
(as described in Section~\ref{sec:assign})
\emph{or} a single image advert. As before advertiser $i$ bids $b_i$, the
marginal utility to advertiser $i$ of an additional click-through;
and now we suppose advertisers $i \in \mI_{text}$  make available
 text adverts and
advertisers $i \in \mI_{image}$ make available image adverts, where
$\mI = \mI_{text} \cup \mI_{image}$ and an advertiser $i \in \mI_{text}
\cap
\mI_{image}$ makes available both a text and an image advert. Let the
click-through probability on image advert $i$
be $p_{i}^\tau$ for $i \in \mI_{image}$,
with click-through probabilities on text adverts as in
Section~\ref{sec:assign}.

For this example the assignment problem is straightforward: the platform
solves the earlier assignment problem~\eqref{ASMT}
over advertisers $i \in \mI_{text}$, and shows text adverts
if the optimum achieved exceeds
$\max_{ i \in \mI_{image}} b_i p_{i}^\tau  $
and otherwise shows an image achieving this latter maximum. Similarly
the calculation of the rebate is straightforward, with one further
assignment problem to be solved for each click-through.

It is of course possible to construct assignment problems that
are not as straightforward.
For example, suppose that adverts are of different sizes,
and the platform has a bound on the sum of the advert sizes shown. The
assignment problem then includes as a special case the knapsack problem.
In general the problem is NP-hard but it becomes computationally
feasible if, for example, there are a limited number of possible
advert sizes, as in the image-text auction above.
\end{example}

\subsection{Controlling the number of slots} \label{sec:controlslots}

The platform may wish to limit the number of slots filled,
if it judges the available adverts as not sufficiently interesting
to searchers. Ultimately showing the wrong or poor quality adverts 
can cause searchers to move platform and so hurt long-term platform
revenue.

Suppose the platform
judges there is a benefit (positive or negative) $q_{il}^\tau$
to a searcher
for an impression of the advert from advertiser $i$ in slot $l$
for a search of type $\tau$, regardless of  whether
or not the searcher clicks on the advert.
The system objective function~\eqref{SYS1} then  becomes
\[
\sum_{i\in\mI} U_i(y_i) +  \bE_\tau \left[ \sum_{i\in\mI}\sum_{l\in\mL}
q_{il}^\tau x_{il}^\tau  \right],
\]
the assignment objective function~\eqref{ASMT1} becomes
\[
\sum_{i\in\mI} \sum_{l\in\mL} \left( b_i  p_{il}^\tau
 + q_{il}^\tau  \right) x_{il}^\tau,
\]
and our results hold with minor amendments.
In particular,
equation~\eqref{bidpriceprimal1} for the price function and
equation~\eqref{eq:Lyapunov} for the Lyapunov function are unaltered,
although of course the functions $y_i(b)$ will
now be defined in terms of solutions to the new assignment problem.

An important special case occurs when $q_{il}^\tau \equiv - R$, where $R$
is
a \emph{reserve price}, but in this case we need to slightly
perturb the set-up to ensure that
Proposition \ref{diffcont} remains sufficient for the monotonicity
property.
Suppose that
$q_{il}^\tau = q^\tau$ for all $i \in \mI, l \in \mL$
where $q^\tau =0$ or  $-R$  with probabilities
$\epsilon$ and $1- \epsilon$ respectively. (Formally, augment the space
$\mT$ to carry a random variable $q^\tau$ that is independent of
the click-through probabilities $p_{il}^\tau$.) Then with probability
$1-\epsilon$ an
advert will be shown in a slot only if its contribution to
 the objective function  of the assignment problem,
$b_i p_{il}^\tau$, is at least $R$. With probability
$\epsilon$  a reserve is not applied: we add the possibility to
ensure $y_i(b)$ is increasing even for small $b_i$.

Of course a reserve $R$ may also have a favourable effect
on the revenue received by the platform,
\cite{OstrovskySchwarz2011,BCKKK:14}.
As an illustration, consider
the generalized second price auction of Example~\ref{example:English}.
A reserve of $R$ will reduce the number of slots filled
if $R > b_{L} p_L^\tau $ and may increase
the revenue, which can be calculated from expression~\eqref{revenue}.
Nevertheless our framework is one of utility maximization: we
assume the platform is trying to assure its long-term revenue
by producing as much benefit as possible for its users, its
advertisers and itself.  There are, of course, several ways
in which the platform could
increase its own revenue within the utility maximization framework:
in the absence of competition from other platforms
it could for example charge an advertiser a fixed fee, less
than the advertiser's consumer surplus, to  participate.

As yet a further example of the flexibility of the framework,
instead of a fixed reserve price
we could allow an organic search result $k$ to compete for a slot,
with a positive benefit $q_{kl}^\tau$, but with $b_k = 0 $.
Recent work has analyzed
the trade-off in objectives between the platform
and advertiser in sponsored
search:  \cite{roberts13}  focus on ranking algorithms, trading off revenue
against welfare, while \cite{BCKKK:14} also include the user as an
additional stakeholder.   Our framework aims to maximize the
aggregate social
welfare of the auction system, but it is noteworthy that
this simple model of the benefit to a user of organic search
results can be subsumed within our framework.

\section{Platform-wide optimization} \label{sec:ext} 

An advertiser may judge some types of click-through as 
more valuable than others. 
In this section we suppose that the platform allows
an advertiser to express such preferences, by 
making distinct bids on different categories 
of search query. The challenge for the advertiser
is to balance her bids across the range of categories
offered to her by the platform.

Suppose the platform allows advertiser $i$ to 
partition the type space $\mT$ into categories $(\mT_{ik}: k\in\mK_i)$.
The categories may be defined in terms of the keywords
used in a search or any other feature of the search type,
such as geographical area or broad classification of the user, 
that the platform is prepared to share with advertiser $i$.
We assume the platform allows  advertiser $i$ to know
the category of the search type $\tau$, namely 
that $\tau \in\mT_{ik}$, but the
platform knows more, namely $\tau$. 
We suppose the platform may vary aspects
of the auction, such as the number of
advertising slots on the page or more generally the
layout of the page, depending on the search type $\tau$.
For example, the platform may use the current screen size of
the user to determine the page layout.

Let $b_{ik}$ be the bid of advertiser $i$ for click-throughs from 
category $k$, and let $b_i
 = (b_{ik} :  k \in \mathcal{K}_i)$ and $b
 = (b_{ik} : i\in\mI,  k \in \mathcal{K}_i)$.
Let $y_{ik}$ be the click-through
rate to advertiser $i$ from searches in category $k\in\mK_i$,
and assume that the expected rate of payment by advertiser $i$
for click-throughs from category $k\in\mK_i$ is 
\begin{equation*}  
\pi_{ik} (b) y_{ik}(b)
= \int_0^{b_{ik}}
\Big({y}_{ik}(b)-{y}_{ik}(b'_{ik}, b) \Big)
db'_{ik},  \qquad i\in\mI, \; k\in\mK_i, 
\end{equation*}
where
$(b'_{ik}, b)$ is the vector obtained from the 
vector $b$ by replacing the component $b_{ik}$
by $b'_{ik}$. 
This rate of payment can be achieved by  either 
of the first two pricing implementations
of Section~\ref{Sec:pricing}:  these implementations
use the function $y_i^\tau(b)$ to determine the charge for
a click-through, and so no difficulty is
caused by the form of the auction depending upon the 
search type $\tau$.

Let $y_i= (y_{ik}: k \in \mathcal{K}_i)$. 
If advertiser $i$'s utility $U_i(y_i)$ is simply a  sum of 
utilities $U_{ik}(y_{ik})$ over the categories $k \in \mathcal{K}_i$
then this model is subsumed in the model treated in earlier sections:
advertiser $i$ can be represented by a collection of 
sub-advertisers, one for each category $k \in \mathcal{K}_i$,
and the platform can set click-through
probabilities to zero for sub-advertiser $k \in \mathcal{K}_i$
if $\tau \notin \mT_{ik}$. But for more general utility functions
we would expect that the bids $b_{ik}, k \in \mathcal{K}_i$, 
cannot be determined independently. 

Suppose, then,  that advertiser $i$'s utility 
$U_i(\cdot)$ is  an
increasing, strictly concave, continuously differentiable
function of the vector $y_i= (y_{ik}: k \in \mathcal{K}_i)$.
Assume that 
the partial derivative ${\partial U_i}/{\partial y_{ik}}$ decreases from $\infty$ to $0$ as $y_{ik}$ increases from $0$ to $\infty$, and that 
$b_{ik}\mapsto {y}_{ik}(b_{ik}; b )$
satisfies the monotonicity property.

Let
\begin{equation*}
U^*_i(b_i)=\max_{y_i\geq 0} \left( U_i(y_i)
-\sum_{k \in \mathcal{K}_i} b_{ik}   {y}_{ik} \right),
\end{equation*}
the
Legendre-Fenchel transform of $U_i(y_i)$, interpretable
as the consumer surplus of advertiser $i$ at prices $b_i$.
Our conditions on $U_i$ and its partial derivatives ensure
there is a unique maximum, interior to the positive orthant,
for any price vector $b_i$ in the positive orthant.
Let $(D_{ik}(b_i): k \in \mathcal{K}_i)$
be the argument $y_i$ that attains
this maximum: it is the demand vector of advertiser $i$
at prices $b_i$, and
\begin{equation} \label{partialmult}
\frac{\partial } {\partial b_{ik}} U^*_i(b_i)
= - D_{ik}(b_i) . 
\end{equation}

Then the question
for advertiser $i$ is how to balance her bids 
$(b_{ik}: k \in \mathcal{K}_i)$
over the
categories  $\mathcal{K}_i$ that are of interest to her. 
The payoff to advertiser $i$ arising from
a vector of bids
$b= (b_{i} : i\in\mI) = (b_{ik} : i\in\mI,  k \in \mathcal{K}_i)$ is then
\begin{equation*} 
u_i(b) = U_i ({y}_i(b))  - \sum_{k \in \mathcal{K}_i} 
\pi_{ik}(b) {y}_{ik}(b), 
\end{equation*}
and the condition for a Nash equilibrium is again~\eqref{Nashr} where
now $b_i$ is a vector.  
Paralleling the development of 
Section~\ref{Mech}, the maximum of the payoff 
function $b_i\mapsto u_i(b_i,b_{-i})$ is attained when
\begin{equation*} 
\frac{\partial }{\partial  b_{ik} } U^*_i(b_{i})
 + y_{ik}(b) = 0, \quad  k \in \mathcal{K}_i,
\end{equation*}
or equivalently $D_{ik}(b_i) = y_{ik}(b)$ for $k \in \mathcal{K}_i$, 
there is a unique Nash equilibrium, 
and these conditions also 
identify the unique system
optimum.

Next suppose that for each $  k \in \mathcal{K}_i$  advertiser $i$ changes her 
bid $b_{ik}(t)$ smoothly
(i.e., continuously and differentiably)
as a consequence of her observation of her 
current click-through rate $y_{ik}(b(t))$
so that
\begin{equation} \label{accord}
\frac{d}{dt}  b_{ik}(t) \gtrless 0 \text{ according as }
y_{ik}(b(t)) \lessgtr D_{ik}(b_i(t)).
\end{equation}
This is a dynamical system representation of advertiser $i$ varying 
$b_{ik}$ smoothly in order to increase or decrease her bid
for keyword $k$ according  to whether the 
currently observed click-through rate $y_{ik}(t)$ is lower or higher
than her demand at her current bid prices. 
Then trajectories converge to the solution of the system problem,
by essentially the same Lyapunov argument as used to prove
Theorem~\ref{CONV}, as we now sketch.

Let 
\begin{equation*}
\mathcal V(b )  = \sum_{i\in\mI} U_i^*(b_i) 
+ \sum_{i\in\mI} \sum_{k \in \mathcal{K}_i}  b_{ik}
{y}_{ik}(b) . 
\end{equation*}
Differentiating $\mV(b(t))$ yields, from~\eqref{partialmult},
Lemma~\ref{diffsumyprop}
and~\eqref{accord}, 
\begin{align*}
\frac{d}{dt}\mathcal V(b(t))
&= \sum_{i\in\mI} \sum_{k \in \mathcal{K}_i} \frac{\partial \mathcal
V}{\partial b_{ik}}
\frac{d}{dt} b_{ik}(t) = -\sum_{i\in\mI} \sum_{k \in \mathcal{K}_i}
\left( D_{ik}(b_i(t)) -
y_{ik}(b(t)) \right)
\frac{d}{dt} b_{ik}(t) \leq 0
\end{align*}
where the inequality is strict unless
$D_{ik}(b_i(t)) = {y}_{ik}(b(t))$ for $i\in\mI, k \in
\mathcal{K}_i$.
But this holds if and only if $y$ solves the system problem.

\begin{remark}
Our approach to auction design separates the computational burden into a
task that can be completed quickly for each page impression by the
platform, and tasks that can be performed more slowly by individual
advertisers or perhaps agents working on their behalf. The task 
for an advertiser is to assess the value to her of 
different forms of click-through. This may not be an easy task,
but it is a task naturally assigned to the advertiser and is made simpler
by requiring only local information in the region of the currently achieved
click-through rates. 
\end{remark}

\begin{remark}
How finely should a search platform allow categories
to be defined, and how finely should it divide its
stream of queries across distinct auctions?
Finer classifications will allow
advertisers to communicate more precisely their
valuations but excessive targeting may lead to
thinner markets and to various forms of adverse selection. These trade-offs
are discussed by \cite{levinmilgrom}, who argue that
the degree of differentiation allowed, or conflation imposed, is
an important aspect of the organization of well-functioning
markets.

In the current context,
observe that advertiser $i$ is forced to conflate her bid $b_{ik}$
across multiple auctions, and in each of these
the set of competing advertisers is likely to be different. Thus
the design of the categories $(\mT_{ik}: k\in\mK_i, i \in \mI )$
provides
ample opportunity to balance the degree of
differentiation allowed to, or conflation imposed upon,  advertisers
by the platform.
\end{remark}

We end this section with two examples which indicate the connections
between our work and earlier
important approaches in the traffic engineering and
resource allocation literature. 

\begin{example} \label{weightings}
Consider a platform with advertisers who prefer
click-throughs that come from one geographical 
area rather than another, or from one set of keywords
rather than another, 
simply because such click-throughs are more likely to 
convert into sales. Then advertiser $i$'s utility
will be a univariate function 
\begin{equation}\label{Uweights}
U_i\Big(  \sum_{k\in\mK_i} w_{ik} y_{ik}\Big)
\end{equation}
where we assume 
$U_i(\cdot)$ satisfies our earlier assumptions
from Section~\ref{sec:opt} and where $w_i = ({w}_{ik}: k\in\mK_i)$ account
for the weight applied to each category by advertiser $i$.

Given that advertiser $i$ 
declares a bid $\tilde{b}_i$ and weights $\tilde{w_i}=(\tilde{w}_{ik}: 
k\in\mK_i)$ (strategically and not necessarily equal to $w_i$), the platform
may use
the information contained in $\tilde{w}$ as well as $\tilde{b}$ and $ \tau$ 
to solve the revised
assignment problem, \text{ASSIGNMENT}($\tau,\tilde{b}, \tilde{w}$),
defined as problem~\eqref{ASMT} with the revised
objective:
\[
{\rm Maximize}  \quad \quad \sum_{i\in\mI} \tilde{b}_i  \sum_{l\in\mL}
\tilde{w}_i^\tau
p_{il}^\tau
x_{il}^\tau,
\]
where $\tilde{w}^\tau_{i} =\tilde{w}_{ik}$ for $\tau\in\mT_{ik}$ and
$k\in\mK_i$. 
Write $b_{ik}=\tilde{b}_i \tilde{w}_{ik}$ and
 $b = (b_{ik} : k \in \mathcal{K}_i, i\in\mI)$.
then the model formally reduces to that of this section.
Essentially there
are number of parallel auctions taking place with 
the search type $\tau$ determining the bids of advertisers: 
if $\tau \in \mT_{ik}$ then the bid of 
advertiser $i$ is $b_{ik}$. 

The reduction replaces
the vector function $U_i(y_{ik}: k \in \mK_i)$ 
by the special case~\eqref{Uweights},
a univariate function of a weighted sum. 
The utility~\eqref{Uweights} is concave but not strictly concave,
and this introduces some minor technicalities and some simplifications.
When strategic advertisers 
maximize over $(b_{ik}: k\in\mK_i)$ their respective payoff 
functions
\[
{u}_i(b) :=  U_i\Big(  \sum_{k\in\mK_i} w_{ik} y_{ik}(b) \Big) -
\sum_{k\in\mK_i}\pi_{ik}(b) y_{ik}(b)
\]
the resulting Nash equilibrium is achieved  under the necessary conditions
\begin{subequations} \label{Wardrop} 
\begin{align}
w_{ik} U_i'\Big(  \sum_{k \in\mK_i} w_{ik} y_{ik}(b)\Big) = b_{ik}, &\qquad
\text{if}\quad  y_{ik}(b) > 0,\label{weightNash1}\\
w_{ik} U_i'\Big(  \sum_{k\in\mK_i} w_{ik} y_{ik}(b)\Big) \leq b_{ik},
&\qquad\text{if}\quad  y_{ik}(b) = 0\label{weightNash2},
\end{align}
\end{subequations}
for $k\in\mK_i$ and $i\in\mI$.
In particular, \eqref{weightNash1} implies that if $y_{ik}(b) > 0$ then
$w_{ik}/b_{ik}$ does not depend on $k\in\mK_i$: hence 
the weights intrinsic to advertiser $i$, $(w_{ik}: k\in\mK_i)$, and the
weights declared strategically by advertiser $i$ 
to the platform, $(\tilde{w}_{ik}=b_{ik}/b_i: k\in\mK_i)$, are in
proportion wherever a positive click-through rate is received (this is 
the simplification achieved by the form~\eqref{Uweights}). 
So the auction mechanism achieves incentive compatibility: 
advertisers are encouraged to truthfully declare their intrinsic weights
for categories for which they are competing.

We can interpret the system optimization as an infinitely large
bipartite congestion game, an interpretation that parallels one of
Vickrey's early motivations in transport pricing;
in particular the  conditions~\eqref{Wardrop} parallel the conditions
for  a traffic equilibrium in a network, \cite{wardrop1952road}, 
\cite{Beckmann1956}.

\end{example}

\begin{example}  \label{budget} 
An advertiser may have a budget constraint
on what she can spend across different types of search, for example,
in an advertising campaign. In this example we note
a simple approach
which captures a
budget constraint within the framework of this section. 

Suppose
\[
U_i (y_i) =  \frac{B_i}{q} \log \sum_{k \in \mathcal{K}_i} (w_{ik}y_{ik})^q
\]
for $0 < q  <1$. Then
\[
 \frac{\partial U_i}{\partial y_{ik}} = \frac{B_i w_{ik}^q{y_{ik}}^{q-1}}
{\sum_{j \in \mathcal{K}_i} (w_{ij}y_{ij})^q}
\]
and so at the unique Nash equilibrium described earlier
in this section, where $\partial U_i / \partial y_{ik} =
b_{ik}$, the budget constraint
\[
\sum_{k \in \mathcal{K}_i} b_{ik} y_{ik}  = B_i
\]
is automatically satisfied; note that the constraint is
on the rate of bidding rather
than expenditure, i.e., not taking into account rebates.

We require $q  <1$ to ensure the strict concavity of $U_i(\cdot)$. As
$q \to 1$, maximizing $U_i(y_i)$ subject
to the budget constraint approaches the problem of maximizing
$\sum_{k \in \mathcal{K}_i} w_{ik}y_{ik}$ subject
to the same budget constraint. In the special case where
all advertisers are
budget constrained  we recover an important early
model for the equilibrium price of goods for buyers
with linear utilities, \cite{Fi1892,EiGa59}.

Advertiser $i$ will receive a stream of rebates, which
may be delayed and will be noisy.
Rebates will cause
the total spent in a period to be less than the budget $B_i$,
and a natural control response would be to spread the total rebate received
in one time period over the budgets available for later time periods.
\cite{Borgs:2007:DBO:1242572.1242644} explore a natural
bidding heuristic for a budget-constrained advertiser which
readily extends to include delayed rebates.

\end{example}


\section{Related Work} \label{sec:related}
In this paper we have considered a problem where the social welfare of an
auction system is optimized subject to the capacity constraints of that
system. 
Social welfare optimization has long been an objective in the design of effective market mechanisms, \cite{Vick61}. 
However, only in the recent literature have computationally efficient methods
been considered for market and auction design, see  \cite{BDX11},
\cite{JaVa07}, \cite{Vaz06}. In the context of electronic commerce and
specifically sponsored search auctions, these computational considerations
are of critical importance given the increased diversity and competition
associated with online advertising.

We have  applied a decomposition approach to the task of optimizing advert
allocation over the vast range of searches that can be conducted,
and separated the task into  sub-problems which can be implemented
by each advertiser and on each search.
The decompositions of interest are familiar and have been important 
in the context of communication network design, \cite{Sr04, KY14}. 
\cite{srta10} is a distinct approach using optimization decomposition
ideas, but based instead on a queueing model of an on-line advert campaign
and using connections to scheduling in wireless networks.

Strategic formulations of these optimization decompositions have  been 
developed: in a simple model 
\cite{johari2004efficiency} show a price of anarchy of 75\% at
a Nash equilibrium. Notably, a single parameter VCG mechanism to yield efficient
allocation was considered by \cite{maheswaran2004social} and 
subsequently generalized by \cite{YaHa07} and \cite{jots09}. Here a
parametrized surrogate utility is employed in the VCG mechanism, where the parameter is selected strategically. 
The message passed from the player to the mechanism is thus 
chosen from a reduced space. 
One part of our decomposition can be viewed  as deriving a  single
parameter VCG mechanism with linear utilities; that is, despite allowing
more general utility functions, the derived mechanism is \textit{as if}
each player had a linear utility function. 
Prior works have used strictly concave surrogate functions while in our
approach linearisation is possible owing to the large search/constraint space employed.
A linear VCG allocation is computationally straightforward 
(\cite{leonard1983elicitation}, \cite{BdVSV}), 
but the 
crucial advantage of our linear framework is that -- in addition to decomposing
the objectives of  advertisers and the platform -- further decomposition
over the search space is possible, leading to a practical 
mechanism.
In particular, the mechanism can be implemented on each search instance:
allocation and pricing both involve standard polynomial time algorithms per-search and per-click, respectively.
In essence, we find a simple implementable auction mechanism that yields an efficient allocation of adverts across the entire search space.

Parametrized VCG mechanisms are examples
of simplified mechanisms, where the set of messages available
to report preferences is restricted.  
 \cite{milgrom2010simplified} has shown that the equilibria of 
a simplified mechanism  are also equilibria of the 
unrestricted mechanism when a
certain outcome closure property is satisfied. 
The closure property states that a
bidder can make an optimal best response within the set of restricted bids
whenever other bidders' choices lie within the restricted set. 
As an example, the closure
property can be applied to concave utility functions under the
restriction of linearity, with the restricted bid communicating a 
tangent plane rather than the entire utility function.

A very common simplification applied in sponsored search auctions 
is conflation.
For example advertisers may be required to make the
same bid per click whatever the position of an advert on 
the page. If advertisers differentiate
between positions beyond each position's observed click-through rate
(for example, if click-throughs from lower positions are
less or more valuable to the advertiser), then there may be
a loss of social welfare from the 
restriction that a bidder must communicate a single parameter to a
mechanism which is unaware
of these positional effects. 
The question of whether 
VCG or GSP mechanisms with this restriction are sufficiently
expressive to communicate the bidders' true values
for positions 
is discussed in detail in 
\cite{milgrom2010simplified} and \cite{dutting2011simplicity}. 
\cite{aggarwal2007bidding} discuss mechanisms which maintain an efficient equilibrium by allowing bidders to specify a minimum slot, in addition to their bid.
Further recent discussions of simplified mechanisms for
sponsored search auctions are 
\cite{chawla2013auctions,hoy2013dynamic,
dutting2014expressiveness}. 

The context in these papers is, in our terms, an auction 
for a single type of search. 
This context serves to 
illustrate an important theoretical question concerning
simplified mechanisms. 
The diverse
stochastic variability found 
in the sponsored search market (see \cite{pk:11}) 
makes the assumption of a single type of search unrealistic. 
The framework we adopt is
rather different: we presume, as described in Section~\ref{sec:assign}, 
that the search platform
knows more than the advertiser about the type of search
being conducted, for example about the searcher,
and that this information affects click-through probabilities. 
For the advertiser there is therefore a considerable
further conflation: the same bid for a click-through
is used over an entire category of search query as well as over
different positions. The information asymmetry between
the platform and the advertisers allows the platform
to assign and price adverts using a per-search level
of granularity on the search type, while the advertisers 
experience only average click-through rates over a diverse
set of search types. Our framework is designed to model
advertisers who differentiate the value of a click-through 
according to search categories, defined in terms of keywords and user
characteristics, rather than advert position: see Section~\ref{sec:ext}.

\cite{Borgs:2007:DBO:1242572.1242644} gave an 
important early treatment of the dynamics of bid optimization, 
and emphasised the importance of equalizing
the ``return-on-investment" across keywords
for budget-limited advertisers. These authors also used a 
continuous time formulation, noted the importance
of random perturbations, proved convergence to a market equilibrium 
in the case of first price auctions and observed
experimentally convergence in the case of second price
auctions. Auctions were all single slot, and bids were
assumed truthful. 
The additional contribution of Section~\ref{Dynamics} 
to this early work on dynamics is that we have established
convergence 
for a  wide class of continuous time dynamics
representing advertisers' best response under our pricing mechanism. 

More recent work on learning and bid optimization is 
reviewed by \cite{eps365566}, who
use the framework
of a multi-armed bandit to devise policies that maximize the expected
number
of click-throughs in a given number of searches within a given budget.
This stream of research typically uses no-regret learning, expressing
convergence in terms of sub-linear temporal convergence. 
\cite{iyer2011mean}
use a mean field approach to treat agents who
need the learn the value to them of a click-through.
These are challenging problems, even for the sequence of single slot second
price auctions treated in these papers.
By comparison our approach uses a dynamical systems framework where fluid
averages are controlled.  We are  also able to  show convergence to a Nash equilibrium, rather than a correlated equilibrium.   Our approach deliberately simplifies the
modelling of the stochastic
streams of click-throughs, which it represents with just their means, but
is
able to deal with multi-slot auctions and with streams of click-throughs
arising from different keywords and categories of searcher.
\cite{Nekipelov2015EC} review work on learning, focusing on a
model of sponsored search auctions: their analysis of data from
BingAds indicates that
typical advertisers  bid a
significantly shaded version of their value, as would be
expected in a  generalized second price auction rather
than a VCG auction.

As noted by \cite{milgrom2010simplified}, the most devastating objections
to Vickrey pricing (\cite{ausubelmilgrom}) apply only when bidders can buy multiple items, and have
no force in sponsored search auctions where each bidder can acquire at most
one position.
\cite{varian2014vcg} have recently argued 
that VCG mechanisms are of practical
interest because they are flexible and extensible. For this reason,
Facebook implements
VCG\footnote{https://developers.facebook.com/docs/marketing-api/pacing -
downloaded on 20 August 2015.} rather than the generalized second price auction
currently used by Bing and Google. These considerations are particularly
relevant for contextual advertisement, unordered page layouts, image-text
adverts and image-video adverts. Such extensions are important and are of
growing interest to online advertisement platforms; see \cite{Goel} for a
recent discussion.


\section{Concluding Remarks} \label{sec:conclude} 

We describe a framework to capture the system
architecture of Ad-auctions. The assignment problem
must be solved rapidly, for each search, while
an advertiser is primarily interested in aggregates
over longer periods of time. The platform knows more about 
the type of a search query
and thus more about click-through probabilities, while
an advertiser knows more about the value to her of additional
click-throughs and is incentivized to communicate
this information via her bids. Thus we model in detail each random
instance of the assignment problem, while we describe
an advertiser's strategic behaviour in terms of averages evolving
in time.  On a slow time-scale
the platform may decide which search types to
pool in distinct auctions, across which the
advertisers will have different preferences
they are able to communicate.

We have used sponsored search auctions as the motivation,   and our model
reflects current practice in sponsored search, where platforms such as
BingAds or Google Adwords  use a variant of the second price
auction to solve the assignment and pricing problem for every search query,
while  advertisers alter bids on timescales measured in hours or days. The
setting, allowing for a large, continuous range of search types and
varying competition, greatly extends the scope of prior models which are
typically limited to the auction of a single keyword amongst a static pool
of advertisers.

We address the task of achieving efficiency over all a platform's searches
under a pay-per-click pricing model. Under the assumption of strategic advertisers, we showed that, with
appropriate pricing, a Nash
equilibrium exists for the advertisers which achieves welfare 
maximizing assignments. 
We  gave  an  appealingly simple way to
implement these prices: namely, by giving advertisers a rebate, constructed
by solving a second assignment problem.
The first assignment is implemented with low computation cost and the solution to the second assignment problem is not used for the
allocation but only for pricing. Further this mechanism is found to be flexible and extends in a straightforward manner to various different page-layouts. Hence
 under the pay-per-click model, this mechanism shows potential to be adapted for use in
current Ad-auctions.

\bibliographystyle{plainnat}
\bibliography{REFERENCES}

\appendix

\section{Proof of Propositions \ref{diffcont} and Lemma \ref{diffsumyprop} }\label{Lemma1}
This section gives proofs of Proposition \ref{diffcont} and 
Lemma \ref{diffsumyprop} concerning properties
of 
the functions
\begin{align*}
{y}_{i}(b)=\bE_{\tau} \sum_{l} p^\tau_{il} x_{il}^{\tau}(b),\qquad \sum_{i} b_i {y}_i(b).
\end{align*}
Proposition \ref{diffcont} requires the further technical lemma, Lemma \ref{Lemma1.1}, which give the Lipschitz continuity of a random point belonging to a polytope as we smoothly change the description of its facets. Lemma \ref{diffsumyprop} employs the Envelope Theorem \cite[Chap. 3]{milgrom2004putting}, as is commonly applied in auction theory.

\begin{lemma}\label{Lemma1.1}  1) If  $U$ is a random vector uniformly distributed inside the unit sphere, $S_n=\{ u\in\bR^n  : ||u||\leq 1 \}$, 
then there exist a constant $K_1$ such that for any two non-zero vectors $b,\tilde{b}\in \bR^n\backslash \{ 0\}$
\begin{equation*}
\bP( b^\T U \geq 0 > \tilde{b}^\T U ) \leq \frac{K_1}{||b || \wedge ||\tilde{b} ||} || b - \tilde{b}||.
\end{equation*}

\noindent 2) 
If $X$ is a random variable with density $f_X$ continuous on its support $\mP$, a polytope $\mP\subset [-1,1]^n$,
then the function $\bP( \mu_1^\T X \geq 0, ...,\mu_k^\T X \geq 0 )$ is Lipschitz continuous as a function of $\mu_1,...,\mu_k$ provided $||\mu_1||,...,||\mu_k||$ are bounded away from zero. 
\end{lemma}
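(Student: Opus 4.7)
The plan is to reduce everything to the angle $\theta$ between $b$ and $\tilde b$. Because $U$ is rotation-invariant and the event $\{b^\T U\ge 0>\tilde b^\T U\}$ depends only on the unit vectors $\hat b=b/\|b\|$ and $\hat{\tilde b}=\tilde b/\|\tilde b\|$, I would rotate coordinates so that $\hat b=e_1$ and $\hat{\tilde b}=\cos\theta\,e_1+\sin\theta\,e_2$. Using polar coordinates in the $(u_1,u_2)$-plane and integrating the $(n-2)$-dimensional disc of radius $\sqrt{1-r^2}$ over the angular sector of opening $\theta$, I get
\begin{equation*}
\mathbb{P}(b^\T U\ge 0>\tilde b^\T U)=\frac{\theta}{2\pi},
\end{equation*}
so only the angle matters. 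Then I would use the elementary inequalities $\theta\le\tfrac{\pi}{2}\|\hat b-\hat{\tilde b}\|$ (since $\|\hat b-\hat{\tilde b}\|=2\sin(\theta/2)$) and the standard bound
\begin{equation*}
\Big\|\tfrac{b}{\|b\|}-\tfrac{\tilde b}{\|\tilde b\|}\Big\|
=\frac{\|b\|\tilde b-b\|\tilde b\| + b\|\tilde b\|-\|b\|\tilde b\|}{\|b\|\|\tilde b\|}\;\;\Longrightarrow\;\;
\Big\|\tfrac{b}{\|b\|}-\tfrac{\tilde b}{\|\tilde b\|}\Big\|\le\frac{2\|b-\tilde b\|}{\|b\|\wedge\|\tilde b\|}
\end{equation*}
to conclude with $K_1=1/2$. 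The only real work is the wedge volume computation, and this is a single polar-coordinate integral.

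\textbf{Plan for part 2.} First I would replace the joint event by a union of discrepancy events via
\begin{equation*}
\bigl|\mathbb{P}\bigl(\bigcap_{j}\{\mu_j^\T X\ge 0\}\bigr)-\mathbb{P}\bigl(\bigcap_j\{\tilde\mu_j^\T X\ge 0\}\bigr)\bigr|
\le \sum_{j=1}^{k}\mathbb{P}\bigl(\mathrm{sign}(\mu_j^\T X)\ne\mathrm{sign}(\tilde\mu_j^\T X)\bigr),
\end{equation*}
so it suffices to bound each term by a constant multiple of $\|\mu_j-\tilde\mu_j\|$. Since $f_X$ is continuous on the compact polytope $\mathcal P$, it is bounded by some $M<\infty$, and since $\mathcal P\subset[-1,1]^n\subset\{x:\|x\|\le\sqrt n\}$, the event $\{\mu^\T X\ge 0>\tilde\mu^\T X\}$ is supported in the wedge
$\{x\in\mathbb R^n:\mu^\T x\ge 0,\tilde\mu^\T x<0\}$ intersected with the ball of radius $\sqrt n$. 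By the same angular computation as in part 1, scaled by a factor $(\sqrt n)^n$, this wedge has Lebesgue volume at most $C_n\,\theta$, where $\theta$ is the angle between $\mu$ and $\tilde\mu$. Combining with the unit-vector estimate of part 1 and the hypothesis that $\|\mu_j\|,\|\tilde\mu_j\|\ge\delta$ for some $\delta>0$ gives
\begin{equation*}
\mathbb{P}(\mu^\T X\ge 0>\tilde\mu^\T X)\le M\,C_n\,\theta\le M\,C_n\,\pi\,\frac{\|\mu-\tilde\mu\|}{\delta},
\end{equation*}
and summing over $j=1,\dots,k$ yields the desired Lipschitz bound with a constant depending only on $n$, $k$, $\delta$, and $\|f_X\|_\infty$.

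\textbf{Main obstacle.} The substantive content is entirely in part 1: pinning down that the probability is exactly (a constant multiple of) the angle between the two vectors. Part 2 is then essentially a packaging step (union bound plus density boundedness plus containment of $\mathcal P$ in a cube, hence in a centred ball). The only subtlety I foresee is keeping track of where the lower-bound hypothesis $\|\mu_j\|,\|\tilde\mu_j\|\ge\delta$ is actually needed: it is precisely what prevents the denominator $\|\mu\|\wedge\|\tilde\mu\|$ in the part 1 estimate from exploding, and without it small perturbations of a tiny $\mu_j$ could flip the sign of $\mu_j^\T X$ with non-negligible probability.
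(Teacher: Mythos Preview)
Your proposal is correct, and in part 1 it is actually a different and cleaner argument than the paper's.

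For part 1, the paper parametrizes the family of hyperplanes between $b$ and $\tilde b$ by $b_\theta = b + \theta(\tilde b - b)$, $\theta\in[0,1]$, observes that each $u$ in the event lies on one of the cross-sections $\{u: b_\theta^{\T}u = 0\}$, and overestimates the volume of the wedge by (arc length of the path $\theta\mapsto b_\theta/\|b_\theta\|$) $\times$ (equatorial $(n{-}1)$-volume $V_{n-1}$), obtaining a bound of the form $2\pi V_{n-1}\|b-\tilde b\|/(\|b\|\wedge\|\tilde b\|)$. Your approach instead exploits rotation invariance to reduce to the plane spanned by $\hat b,\hat{\tilde b}$ and computes the probability \emph{exactly} as $\theta/(2\pi)$, where $\theta$ is the angle between the two vectors; the chain $\theta\le\tfrac{\pi}{2}\|\hat b-\hat{\tilde b}\|\le \pi\|b-\tilde b\|/(\|b\|\wedge\|\tilde b\|)$ then finishes. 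Your route is shorter, gives a dimension-free constant (indeed $K_1=1/2$), and avoids the informal ``sweep times cross-section'' volume estimate. The paper's sweep argument has the minor advantage that it makes clear why the same reasoning would work for any rotation-invariant distribution whose equatorial sections have bounded $(n{-}1)$-volume, but that generality is not needed here.

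For part 2 your argument is essentially the same as the paper's: reduce to single-coordinate discrepancies (you use a global union bound over $j$; the paper proves componentwise Lipschitz continuity by varying one $\mu_j$ at a time and telescoping), then use that $f_X$ is bounded on the compact polytope and that $\mathcal P\subset[-1,1]^n$ is contained in a centred ball of radius $\sqrt n$ to compare $\bP(\mu^{\T}X\ge 0>\tilde\mu^{\T}X)$ to the uniform-in-the-ball probability from part 1. Both routes use the lower bound $\|\mu_j\|\ge\delta$ in exactly the same place, to control the denominator from part~1. One cosmetic point: the displayed identity you wrote for $\|b/\|b\|-\tilde b/\|\tilde b\|\|$ has unbalanced norm bars; the intended add-and-subtract estimate $\|\hat b-\hat{\tilde b}\|\le 2\|b-\tilde b\|/(\|b\|\wedge\|\tilde b\|)$ is of course correct.
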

 \begin{proof}
1) We give a geometric proof of the result. We assume, wlog, that $||b||\geq ||\tilde{b}||$, and we let $V_n$ be the volume of $S$. 
For every $u$ satisfying $b^\T u \geq 0 > \tilde{b}^\T u$, there exists a
$\theta\in [0,1]$ such that $b^\T u + \theta ( \tilde{b}^\T - b^\T)u=0$.
Let $b_\theta$ be the unit vector proportional to $b + \theta ( \tilde{b} -
b)$. So, $b_{\tilde{\theta}}^\T u =0$. Or, in other words, if $b^\T u +
\theta ( \tilde{b}^\T - b^\T)u=0$ then $u$ belongs to a cross section of
the sphere $\{ u' \in \mS : b_\theta u' =0\}$ for some $b_\theta$
proportional to $b + \theta( \tilde{b} -b)$. Thus the volume of $\{ u :b^\T
u \geq 0 > \tilde{b}^\T u \}$ is overestimated by the area of the sets $\{ u' \in \mS : b_\theta u' =0\}$ multiplied by the change in the normal vector $b_\theta$.

With this in mind we note three facts: 1) Each cross section $\{ u\in S :  b_\theta^\T u =0\}$ has the same volume $V_{n-1}$ in its ${n-1}$ dimensional subspace; 2) The path $\mP=\{ b_\theta : \theta\in [0,1]\}$ is a circular path starting at $b/ || b||$ and ending at $\tilde{b}/ || \tilde{b}||$, and thus has length bounded above by the terms 
\begin{equation*}
2\pi \bigg|\bigg| \frac{b}{||{b}||} - \frac{\tilde{b}}{||\tilde{b}||}\bigg|\bigg|  \leq \frac{2\pi}{||{\tilde{b}}||}  \big|\big|   b - \tilde{b} \big|\big|; 
\end{equation*}
and, 3) $\{u\in S: b^\T u \geq 0 > \tilde{b}^\T u\} = \{ u\in S :  b_\theta^\T u =0,\; \theta\in [0,1]\}$. Thus, we see we can bound the probability $\bP( b^\T U \geq 0 > \tilde{b}^\T U )$ by the length of the path $\mP$ times the volume of cross sections $\{ u\in S :  b_\theta^\T u =0\}$. In other words,
\begin{equation*}
\bP( b^\T U \geq 0 > \tilde{b}^\T U ) \leq \frac{2\pi V_{n-1}}{||\tilde{b}||} || b - \tilde{b}||,
\end{equation*}
as required.\\
2)  A function which is componentwise Lipschitz continuous  is Lipschitz continuous. So, without loss of generality, we prove that the first component of our function is Lipschitz continuous. Observe
\begin{align}
& \Big|\Big| \bP \left( \mu_1^\T X \geq 0,  \mu_2^\T X \geq 0, ...,\mu_k^\T X \geq 0\right) - \bP\left( \tilde{\mu}_1^\T X \geq 0 ,  \mu_2^\T X \geq 0, ...,\mu_k^\T X \geq 0\right)  \Big|\Big| \notag \\
= & \Big|\Big| \bP \left( \mu_1^\T X \geq 0 > \tilde{\mu}^\T X,  \mu_2^\T X \geq 0, ...,\mu_k^\T X \geq 0\right) - \bP \left( \tilde{\mu}_1^\T X \geq 0 > \mu^\T X ,  \mu_2^\T X \geq 0, ...,\mu_k^\T X \geq 0 \right) \Big|\Big| \notag\\
\leq &  \bP \left( \mu^\T X \geq 0 > \tilde{\mu}^\T X\right) + \bP \left( \tilde{\mu}^\T X \geq 0 > \mu^\T X \right)\, . \label{PLipBound}
\end{align}
Also since $f$ is a continuous  density on click-through probabilities $\tilde{\mP}$, it is bounded by a constant. So, we can bound the above probabilities with uniform random variables:
\begin{equation*}
 \bP \left( \mu^\T X \geq 0 > \tilde{\mu}^\T X\right) \leq K_2  \bP \left( \mu^\T U \geq 0 > \tilde{\mu}^\T U\right)
\end{equation*}
for a constant $K_2$ and for $U$ a uniform random variable on the unit sphere in $\bR^n$. Now applying part 1) of this Lemma
\begin{align*}
 \bP \left( \mu^\T X \geq 0 > \tilde{\mu}^\T X\right) \leq &\frac{K_1 K_2}{ ||\mu||\wedge || \tilde{\mu}||}   || \mu - \tilde{\mu}|| 
\leq \frac{K_1 K_2}{ K_3}   || \mu - \tilde{\mu}|| .
\end{align*}
where $K_3$ is the constant by which $\mu$ and $\tilde{\mu}$ are bounded away from zero.
Thus, applying this inequality to \eqref{PLipBound}, we have that $ \bP \left( \mu_1^\T X \geq 0,  \mu_2^\T X \geq 0, ...,\mu_k^\T X \geq 0\right)$ is Lipschitz continuous in its first component and thus is Lipschitz continuous.
   \end{proof}

The previous lemmas suggest that provided there is a certain amount of variability in $p_{ij}^\tau$ then we can expect the average performance of an advertiser to be a continuous function of the declared prices $b$. 

\begin{proof}[Proof of Proposition \ref{diffcont}]
First we argue the continuity of $b_i\mapsto y_i(b_i,b_{-i})$ and then argue that it is strictly increasing and positive. We let $\mS$ index the assignments that can be scheduled from $\mI$ to $\mL$.
Notice, provide there is a unique maximal assignment,
\begin{align}\label{xeq}
x^\tau_{il}(b)
&=\sum_{\pi\in\mS: \pi(i)=l} \bI \left[ \sum_{k}b_kp^\tau_{k\pi(k)} \geq \sum_k b_k p^\tau_k \tilde{\pi}(k),\; \forall \tilde{\pi}\neq \pi \right]\notag \\
& =\sum_{\pi\in\mS: \pi(i)=l} \prod_{\tilde{\pi} \neq \pi } \bI \left[ \sum_{k}b_kp^\tau_{k\pi(k)} \geq \sum_k b_k p^\tau_{k \tilde{\pi}(k)}\right].
\end{align}
Here $\bI$ is the indicator function.  Notice, since $\bP_{\tau}$ admits a density, $f(p^\tau)$, then with probability one there is a unique maximizer to the  problem ASSIGNMENT($\tau,b$). 
So the equality \eqref{xeq} holds almost surely for all $b> 0$.

For two assignments $\pi$ and $\tilde{\pi}$, we define the vector
\begin{equation*}
\mu_{\pi \tilde{\pi}} := \left( b_i \bI[\pi(i)=l] - b_i \bI[\tilde{\pi}(i)=l]   :\; i\in \mI, l\in\mL \right).
\end{equation*}
Notice for any two distinct permutations, the non-zero components of $\bI[\tilde{\pi}(i)=l]$ are distinct. So  the vectors $\mu_{\pi \tilde{\pi}}$ are distinct and non-zero over $\tilde{\pi}\neq \pi$.
Since the maximal assignment is almost surely unique, we have
\begin{align}
{x}_{il}(b)&=\sum_{\pi\in\mS: \pi(i)=l}  \bE\left[  \prod_{\tilde{\pi} \neq \pi } \bI \left[ \sum_{k}b_kp^\tau_{k\pi(k)} \geq \sum_k b_k p^\tau_{k \tilde{\pi}(k)}\right]\right] = \sum_{\pi\in\mS: \pi(i)=l}  \bP\left( \mu_{\pi \tilde{\pi}}^\T p \geq 0,\; \forall \tilde{\pi} \neq {\pi}\right). \label{xbarY}
\end{align}
Thus if the function $  \bP\left( \mu_{\pi \tilde{\pi}}^\T p \geq 0,\; \forall \tilde{\pi} \neq {\pi}\right) $ is  Lipschitz continuous then we have same properties for functions ${x}_{jl}(b)$. The Lipschitz continuity of $ \bP\left( \mu_{\pi \tilde{\pi}}^\T p \geq 0,\; \forall \tilde{\pi} \neq {\pi}\right)$ is proven in Lemma \ref{Lemma1.1}.
This implies the Lipschitz property for ${x}_{il}(b)$ with $b>0$ and since $y_i$ is a finite sum of these terms the same continuity holds for $b_i \mapsto y_i(b_i,b_{-i})$, with $b=(b_i,b_{-i})>0$.  
Further, continuity at $b_i=0$ is also ensured by bounded convergence: there are greater than $|\mL|$ positive bids occur in $b_{-i}$ the assignment of these must eventually outweighs the assignment of $i$ as $b_i\searrow 0$. In other words, \eqref{xeq} goes to zero point-wise as $b_i\searrow 0$. Thus bounded convergence applies to \eqref{xbarY} and, also, $y_i(b)$ which implies $y_i(b_i,b_{-i})\rightarrow 0$ as $b_{i}\searrow 0$.

We now prove that the function $b_i\mapsto {y}_i(b_i,b_{-i})$ is strictly increasing for $b_{-i}\neq 0$. First we show that it is increasing. 
Since ${y}_i(b) = \bE_\tau y^\tau_i(b)$  \eqref{ASMT9}, if we can prove $y^\tau_i(b)$ is increasing then so is ${y}_{i}(b)$. Further note
\begin{equation*}
\sum_{i\in\mI} b_i y_i^\tau(b) = \sum_{i\in\mI, l\in\mL} b_i p_{il}^\tau x_{il}^\tau(b)
\end{equation*}
which is the optimal objective for the assignment problem \eqref{ASMT}.

Define $b'$ with $b_i' < b_i$ and $b_j' = b_j$ for each $j\neq i$. We now proceed by contradiction. Suppose that ${y}_i(b') > {y}(b)$, then the following equalities and inequalities hold
\begin{align*}
\sum_{j\in\mI} b_j y^\tau_j(b) &= (b_i - b_i' ) y^\tau_i(b) + \sum_{j\in\mJ} b'_j y^\tau_j(b) \\
& \leq (b_i - b_i' ) y^\tau_i(b) + \sum_{j\in\mJ} b'_j y^\tau_j(b') \\
& <  (b_i - b_i' ) y^\tau_i(b') + \sum_{j\in\mJ} b'_j y^\tau_j(b') =\sum_{j\in\mJ} b_j y_j^\tau(b').
\end{align*}
Here the first equality holds by the optimality of $y^\tau(b')$ and the second holds by assumption. But notice the resulting equality above contradicts the optimality of $y^\tau(b)$. Thus by contradiction, $y_i^\tau(b)$ is increasing in $b_i$ and, after taking expectations, so is ${y}_i(b)$. 

We now prove that $b_i \mapsto {y}_i(b)$ is strictly increasing. Let $b'$ be such that $b'_i > b_i$ and $b'_j = b_i$ for all $j\neq i$. The result proceeds by showing that
\begin{equation*} 
\bP( y^\tau_i(b') >  y^\tau_i(b) | E ) > 0
\end{equation*}
where we condition on an event $E$ with non-zero probability. Notice, after taking expectations, this implies that ${y}_i(b') > {y}_i(b)$.

Now since $f(p)$ is positive on a region containing the origin, $f(p)$ stochastically dominates a uniform random variable on the set of increasing click-through rates, $\tilde{\mathcal{P}}\cap [0,\epsilon]^{\mI\times \mL}$, for some $\epsilon$. Thus it is sufficient to prove the result for $u=(u_{il} : i\in\mI, l\in\mL )$ uniform on $\tilde{\mathcal{P}}\cap [0,\epsilon]^{\mI\times \mL}$. Now, for instance, there is positive probability 
that advertiser $i$ and $j$, with $b_j>0$, compete exclusively over the top
two slots, $l=1,2$. This occurs, for instance, when $i$ and $j$ have
click-through rate over $ \epsilon /2 $ and all other advertisers have expected revenue that is half of the lower bound revenue of $i$ and $j$, namely, the event 
\begin{equation*}
E := \left\{ \min_{\substack{k= i,j\\l\in\mL}}  u_{kl} \geq \frac{\epsilon}{2},\;\; 2 \max_{\substack{k\neq i,j\\l\in\mL}}\left\{ b_k u_{kl} \right\} \leq \frac{\epsilon}{2} \min\{ b_i , b_j  \} \right\}.
\end{equation*}
This event has positive probability and then only $i$ and $j$ can appear on the top two slots on this event.

Given this event, advertiser $i$ achieves the top position with bid $b'_i$ and the second position with bid $b_i$ on the condition
\begin{equation*}
b_i' (u_{i1}-u_{i2} ) > b_j (u_{j1}-u_{j2}) > b_i (u_{i1}-u_{i2} ).
\end{equation*}
Since, after conditioning on $E$, $u_{i1},u_{i2},u_{j1},u_{j2}$ remain
independent and uniformly distributed (on the set $\tilde{\mathcal{P}}\cap \{ u_{i1},u_{i2},u_{j1},u_{j2} \geq \epsilon/2 \}$), it is a straightforward calculation that
\begin{equation*}
\bP \left( b_i' (u_{i1}-u_{i2} ) > b_j (u_{j1}-u_{j2}) > b_i (u_{i1}-u_{i2} ) \big|  E \right) >0.
\end{equation*}
Since $u_{i1}$, the value of $y_i^\tau$ achieved by $b'_i$ on $E$, is
strictly bigger than $u_{i2}$, the value of $y_i^\tau$ achieved by $b_i$ on $E$, the above inequality implies
\begin{equation*}
\bP( y^\tau_i(b') >  y^\tau_i(b) | E ) > 0,
\end{equation*}
and thus ${y}_i(b)< {y}_i(b')$, as required. Further, note that this argument implies the required property that $y_i(b_i,b_{-i})>0$ for $b_i>0$.
   \end{proof}

\begin{lemma}\label{diffsumyprop}
The function $b \mapsto \sum_{i\in\mI} b_i {y}_i(b)$ is convex
and continuously differentiable for $b\neq 0$; further, 
\begin{equation*} 
\frac{d}{d b_i}\left\{  \sum_{i'\in\mI}  b_{i'} {y}_{i'}(b) \right\} =
{y}_i(b).
\end{equation*}
and
\begin{equation}\label{ygoesinfty}
\lim_{||b || \rightarrow\infty} \sum_{i\in\mI} b_i{y}_i(b) = \infty.
\end{equation}
\end{lemma}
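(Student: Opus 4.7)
The plan is to identify $\sum_{i\in\mI} b_i y_i(b)$ with the expected value of the assignment problem and then apply envelope-type arguments. Observe that
\begin{equation*}
\sum_{i\in\mI} b_i y_i(b) = \bE_\tau V^\tau(b),
\qquad
V^\tau(b) := \max_{x^\tau\in\mS} \sum_{i\in\mI,\, l\in\mL} b_i p_{il}^\tau x_{il}^\tau,
\end{equation*}
since $V^\tau(b) = \sum_i b_i y_i^\tau(b)$ by definition of $y_i^\tau$.

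\textbf{Convexity.} For each $\tau$, $V^\tau(b)$ is a pointwise maximum over finitely many integral assignments in $\mS$ of functions that are linear in $b$, hence convex and continuous in $b$. Taking expectations preserves convexity, so $b\mapsto\sum_i b_i y_i(b)$ is convex on the positive orthant.

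\textbf{Derivative formula.} By the monotonicity assumption the optimal $y^\tau(b)$ is unique with probability one. Whenever the optimizer is unique, the envelope theorem yields
\begin{equation*}
\frac{\partial V^\tau}{\partial b_i}(b) = \sum_{l\in\mL} p_{il}^\tau x_{il}^\tau(b) = y_i^\tau(b).
\end{equation*}
Uniform boundedness $0\leq y_i^\tau(b)\leq 1$ together with local Lipschitz continuity of $V^\tau$ (a consequence of convexity) permits interchanging differentiation and expectation via dominated convergence, giving
\begin{equation*}
\frac{d}{db_i}\sum_{i'\in\mI} b_{i'} y_{i'}(b) = \bE_\tau y_i^\tau(b) = y_i(b).
\end{equation*}
Continuity of $b\mapsto y_i(b)$ follows from the monotonicity property, so $\sum_i b_i y_i(b)$ is continuously differentiable for $b\neq 0$.

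\textbf{The limit \eqref{ygoesinfty}.} For any fixed $i\in\mI$ and $l\in\mL$, the singleton assignment placing advertiser $i$ in slot $l$ alone is feasible in \text{ASSIGNMENT}$(\tau,b)$, so $V^\tau(b)\geq b_i p_{il}^\tau$. Maximizing over $l$ and taking expectations,
\begin{equation*}
\sum_{j\in\mI} b_j y_j(b) \;=\; \bE_\tau V^\tau(b) \;\geq\; b_i\, \bE_\tau\max_{l\in\mL} p_{il}^\tau \;=:\; b_i\, \bar y_i,
\end{equation*}
where $\bar y_i>0$: positivity of the click-through density in a neighborhood of the origin forces $\max_l p_{il}^\tau>0$ with positive probability (equivalently, when $b_{-i}=0$ advertiser $i$ is alone and captures her best slot). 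Since $\|b\|\to\infty$ forces $b_{i^*}\to\infty$ along a subsequence for some $i^*\in\mI$, we conclude $\sum_j b_j y_j(b)\geq b_{i^*}\bar y_{i^*}\to\infty$.

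The main obstacle is the envelope step together with the interchange of differentiation and expectation; a.s.\ uniqueness of the assignment outcome (ensured by the monotonicity property) plus the uniform bound $|y_i^\tau(b)|\leq 1$ handle both concerns, after which the convexity and the lower bound $b_i\bar y_i$ are straightforward.
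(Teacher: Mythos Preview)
Your proof is correct and mirrors the paper's argument closely: both identify $\sum_i b_i y_i(b)$ with the expected optimal assignment value, deduce convexity as an expectation of pointwise maxima of linear functions, and obtain the derivative via an envelope argument (the paper does this by explicitly sandwiching the difference quotient between $y_i(b)$ and $y_i(b^h)$ and then invoking continuity of $y_i$, which is your dominated-convergence step made concrete). The one genuine divergence is in the coercivity claim~\eqref{ygoesinfty}: the paper exploits positive homogeneity of degree one of $\sum_i b_i y_i(b)$ together with strict positivity on the unit sphere (from the monotonicity property), giving linear growth in $\|b\|$; your feasibility lower bound $\sum_j b_j y_j(b)\geq b_i\,\bE_\tau\max_l p_{il}^\tau$ is an equally valid and arguably more direct route, though note that your justification of $\bar y_i>0$ invokes the density sufficient condition of Proposition~\ref{diffcont} rather than the standing monotonicity assumption the paper uses.
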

 \begin{proof}
{The optimal value of the assignment problem \eqref{ASMT} is convex as a
function of $b$, since it is the supremum of a set of linear functions. Thus 
the function $b \mapsto \sum_{i\in\mI} b_i {y}_i(b)$, 
a linear combination of convex functions, is also convex.
Further $\sum_{i\in\mI} b_i {y}_i(\tilde{b})$  is a supporting
hyperplane at the point $\tilde{b}$. 
Differentiability can be shown to follow 
from the continuity of
${y}(\tilde{b})$ as a function of $\tilde{b}$, and we next give a detailed
proof} {following the Envelope Theorem~\cite[Chap.
3]{milgrom2004putting}.}

{Since by definition, $x^{\tau}(b)$ is optimal for the assignment problem, we have that
\begin{align*}
\sum_{i\in\mI} b_i {y}_i(b)& = \bE \left[ \sum_{i\in\mI} \sum_{l\in\mL} b_i  p_{il}^\tau x_{il}^{\tau}(b) \right]
=  \bE \left[ \max_{x^\tau\in\mS} \sum_{i\in\mI} \sum_{l\in\mL} b_i  p_{il}^\tau x_{il}^{\tau} \right],
\end{align*}
Letting $b^h= b + e_{i} h$, where $e_i$ is the $i$th unit vector in $\bR^\mI$ and $h>0$ (a symmetric argument holds for $h<0$). We see that the partial derivative with respect to $b_i$ is lower-bounded
\begin{align}
 \sum_{i'\in\mI} \frac{ b^h_{i'} {y}_{i'} ( b^h) - b_{i'} {y}_{i'}(b) }{h}
 \geq & \frac{1}{h} \left\{  \bE \left[ \sum_{i'\in\mI} \sum_{l\in\mL} b^h_{i'}  p_{i'l}^\tau x_{i'l}^{\tau}(b) \right] -  \bE \left[ \sum_{i'\in\mI} \sum_{l\in\mL} b_{i'}  p_{i'l}^\tau x_{i'l}^{\tau}(b) \right] \right\} \label{ypartial} \\
=& \bE \left[  \sum_{l\in\mL}   p_{il}^\tau x_{il}^{\tau}(b) \right]  = {y}_i(b).\notag
\end{align}
The inequality above holds because $x^\tau(b)$ is suboptimal for the parameter choice $b^h$. By the same argument, applied to $b_{i}y_{i}(b)$ instead of  $b_{i'}y_{i'}(b)$ in \eqref{ypartial},  we also have that
\begin{align*}
& \sum_{i'\in\mI} \frac{ b^h_{i'} {y}_{i'} ( b^h) - b_{i'} {y}_{i'}(b) }{h} \leq {y}_i(b^h).
\end{align*}
ny the continuity of ${y}_i(b)$, letting $h\rightarrow 0$ gives that
\begin{equation*}
\frac{d}{d b_i} \sum_{i'\in\mI}  b_{i'} {y}_{i'}(b) = {y}_i(b),
\end{equation*}
as required. }

Since $y_i(b_i,b_{-i})$ is non-zero for
$b_i>0$ it follows that 
\[
\min_{ ||b||=1} \sum_{i\in\mI} b_i{y}_i(b) >0,
\]
and consequently, letting $||b||\to \infty$, we see that \eqref{ygoesinfty}
holds.
 
 \end{proof}

\section{Proof of Propositions \ref{Decomp1} and \ref{NashProp}}

\begin{proof}[Proof of Proposition \ref{Decomp1}]
A Lagrangian of the system problem \eqref{SYS1}, \eqref{SYS2}  can be written as follows
\begin{align}\label{Lag}
L_{sys}(x,y;b)&=\sum_{i\in\mI} U_i(y_i) +\sum_{i\in\mI} b_i  \bE_\tau\! \left[ \sum_{l\in\mL} p_{il}^\tau x_{il}^\tau - y_{i} \right].
\end{align}
Note that we intentionally  omit  the scheduling constraints from  our Lagrangian, and therefore we   must maximize subject to these constraints, (\ref{SYS3}-\ref{SYS4}), when optimizing our Lagrangian.  

Let 
$\mA$ be the set of
variables $x=(x^\tau \in\mS: \tau \in \mT)$
satisfying the assignment constraints~(\ref{SYS3}-\ref{SYS5}).
We see that our Lagrangian problem is separable in the following sense
\begin{subequations}\label{Lsep}
\begin{align}
\max_{\substack{x\in\mA\\ y,z\in\bR^\mI_+; }} L_{sys}(x,y;b) &=\sum_{i\in\mI} \max_{y_i\geq 0}\left\{ U_i(y_i)-b_iy_i \right\} \label{Lsep1}\\
&+  \bE_\tau \left[ \max_{x^\tau \in \mS} \sum_{i\in\mI}\sum_{l\in\mL} b_i p_{il}^\tau x_{il}^\tau \right] \label{Lsep2}
\end{align}
\end{subequations}
Here $\mS$ denotes the set of $x'\in\bR_+^{\mI\times\mL}$ such that for each $i\in\mI$ and $l\in\mL$
\begin{equation*} 
\sum_{l'\in\mL} x'_{il'} \leq 1, \text{ and } \sum_{i'\in\mI} x'_{i'l} \leq 1.
\end{equation*}
We now show that solutions $\tilde{x}$, $\tilde{y}$ and $\tilde{b}$  satisfying the Conditions A and B of our Proposition  are optimal for the  Lagrangian \eqref{Lsep1} and \eqref{Lsep2} when $b=\tilde{b}$.

Firstly, suppose Conditions A and B are satisfied. Assuming Condition A, the following is a straightforward application of Fenchel duality. If $\tilde{b}_i$ a solution to the optimization 
\begin{equation*}
{\rm minimize}\quad {U^*_i(b_i) +  b_i \tilde{y}_i}\quad {\rm over} \quad b_i \geq 0,
\end{equation*}
then, under our expression~\eqref{DU}  for $U^*$, the solution is achieved
when $D_i(\tilde{b}_i)=\tilde{y}_i$ or equivalently when $U_i'(\tilde{y}_i)
= \tilde{b}_i$. Thus it is clear that $\tilde{y}_i$ solves the optimization
\begin{equation*}
 \max_{y_i\geq 0}\{ U_i(y_i) -\tilde{b}_iy_i\}.
\end{equation*}
Hence if Condition A is satisfied, then $\tilde{y}_i$ optimizes \eqref{Lsep1} when we choose $b_i=\tilde{b}_i$.

Secondly,  if $\tilde{x}^\tau$ solves ASSIGNMENT($\tau$,$\tilde{b}$) for each $\tau$, because each maximization inside the expectation \eqref{Lsep2} is an assignment problem, then  \eqref{Lsep2} is maximized by $\tilde{x}$ when we take $b=\tilde{b}$.

These two conditions, Condition A and B, show that the Lagrangian~\eqref{Lag} 
is maximized by $\tilde{x}$ and $\tilde{y}$ with Lagrange multipliers $\tilde{b}$. In addition,  $\tilde{x}$ and $\tilde{y}$  are feasible for the system optimization \eqref{SYS} and hence we have a feasible optimal solution for this Lagrangian problem. But as we  demonstrate in Proposition \ref{AppendixProp} below,  Lagrangian sufficiency still holds for the system problem \eqref{SYS} -- despite the infinite number of constraints. Therefore we have shown a solution to Conditions A and B is optimal for the system problem.

Conversely, we know that strong duality holds for the system optimization \eqref{SYS} --  even with the infinite number of constraints for this optimization (see Theorem \ref{AppendixTheorem} in the Appendix for a proof).  Hence there exists a vector $\tilde{b}$ such that an optimal solution to the system problem is also an optimal solution to the Lagrangian problem when we chose Lagrange multipliers $\tilde{b}$. Thus, an optimal solution to the SYSTEM($U,\mI,\bP_\tau$) must optimize \eqref{Lsep1} and \eqref{Lsep2}, and as discussed these solutions correspond to Conditions A and B. In other words, an optimal solution to the system problem satisfies Conditions A and B with this choice of $\tilde{b}$.   \end{proof}

\begin{proof}[Proof of Proposition \ref{NashProp}]
a) From Theorem \ref{Decomp1}, the Lagrangian of the system problem can be written as follows
\begin{align}\label{LSYS}
L_{sys}(x,y;b)&=\sum_{i\in\mI} U_i(y_i) +\sum_{i\in\mI} b_i  \bE_\tau\! \left[ \sum_{l\in\mL} p_{il}^\tau x_{il}^\tau - y_{i} \right].
\end{align}
Recall from  \eqref{Lsep}, this Lagrangian is separable and is maximized as 
\begin{align*}
\max_{\substack{x\in\mA\\ y,z\in\bR^\mI_+; }} L_{sys}(x,y;b)
=& \sum_{i\in\mI} \max_{y_i\geq 0}\left\{ U_i(y_i)-b_iy_i \right\} +  \bE_\tau \left[ \max_{x^\tau \in \mS} \sum_{i\in\mI}\sum_{l\in\mL} b_i p_{il}^\tau x_{il}^\tau \right]\\
 =& \sum_{i\in\mI} U^*_i(b_i)  + \bE_\tau \left[ \sum_{i\in\mI}b_i  \sum_{l\in\mL} p_{il}^\tau {x}_{il}^{\tau}(b) \right]
= \sum_{i\in\mI} ( U^*_i(b_i) + b_i {y}_i(b) ). 
\end{align*}
In the second equality above, we rearrange the assignment optimization in terms of the click-through rate of each advertiser, ${y}_i(b)$.

Thus the dual of this optimization problem is as required:
\begin{align*}
&\text{Minimize} \quad \sum_{i\in\mI} \left[ U_i^*(b_i)+b_i{y}_i(b) \right]\qquad\text{over} \qquad \quad b_i \geq 0 , \qquad i\in\mI.
\end{align*}

We  analyze this dual problem. We first show that optimization \eqref{SYSdual} is minimized when $0< b_i < \infty$ for each $i\in\mI$. 
We consider the function 
\begin{equation*}
\sum_{i\in\mI} b_i{y}_i(b).
\end{equation*}
With the technical lemma, Lemma \ref{diffsumyprop}, we see that this function is continuous and, for $b_{-i}\neq 0$, differentiable   with $i$th partial derivative given by the continuous function ${y}_i(b)$. 
Further it satisfies~\eqref{ygoesinfty}.
 Thus since $U^*_i(b_i)$ is a positive function, we see that the dual
minimization \eqref{SYSdual} must be achieved by a finite solution $b^*$.
In addition, by definition $D_i(b)=-(U^*_i)'(b)=(U_i')^{-1}(b)$, and so the objective of the dual  is continuously differentiable for $b>0$.  Since 
$D_i(0)=\infty$, the minimum of the dual problem  \eqref{SYSdual} must be achieved by $b^*_i>0$ for each $i\in\mI$.
Now, as the objective of \eqref{SYSdual} is continuously differentiable for $b$ strictly positive, it is minimized iff for each $i\in\mI$
\begin{equation*}
\frac{d U^*_i}{d b_i}(b^*_i) + {y}_i(b^*)=0.
\end{equation*}
Finally, since each function $U^*_i$ is strictly convex, dual objective is strictly convex and so the above minimizer is unique.

b) 
For the Lagrangian for the system problem, \eqref{LSYS}, Strong Duality holds by Theorem \ref{AppendixTheorem}. So, there exist Lagrange multipliers $b^*$, such that
\begin{align}
  \sum_{i\in\mI} \big[ U_i^*(b^*_i)+b^*_i{y}_i(b^*)\big] 
= &  \max_{\substack{x\in\mA\\ y\in\bR^\mI_+; }} L_{sys}(x,y;b^*) 
= \max_{\substack{x\in\mA\\ y\in\bR^\mI_+; }}  \sum_{i\in\mI} U_i(y_i)\notag
\end{align}
where there are feasible vectors $x^*$, $y^*$ achieving the optimum of both maximizations above.  By weak duality it is clear that $b^*$ must be optimal for the dual problem \eqref{SYSdual}. Further, since $x^*$ optimizes the Lagrangian $L_{sys}$ with Lagrange multipliers $b^*$,  it solves the assignment problem, $x^{*\tau}=x^{\tau}(b^*)$. 
   \end{proof}

\section{Lagrangian Optimization} \label{sec:Lagrange}

In this paper, we consider optimization problems that have a potentially infinite number of constraints, in particular, for the system-wide optimization \eqref{SYS}. Thus it is not immediately clear that the Lagrangian approach -- ordinarily applied with a finite number of constraints -- immediately applies to our setting. We demonstrate that certain principle results, namely weak duality, the Lagrangian Sufficiency and strong duality, apply to our setting. These technical lemmas supplement proofs in Propositions  \ref{Decomp1} and \ref{NashProp}.

We consider an optimization of the form
\begin{subequations}\label{LOPT}
\begin{align}
&\rm{Maximize}  &&g(y) &\label{LOPT1}\\
&\rm{subject\;to} && y_{i}\leq \bE_\mu [  x_{i} ],\quad
i=1,...,n, &\label{LOPT2}\\
& && f_j(x(\tau))\leq c_j,\quad \tau\in\mT,\; j=1,...,m,&\label{LOPT3}\\
&\rm{over} &&  y\in\bR^n,\quad x\in\mB(\mT,\bR^n).& \label{LOPT4}
\end{align}
\end{subequations}

In the above optimization, we consider probability space $(\mT,\bP_\mu)$ and measurable random variable $x:\mT\rightarrow \bR^n$. We let $\mB(\mT,\bR^n)$ index the set of Borel measurable functions from $\mT$ to $\bR^n$. We assume that $g:\bR^n\rightarrow\bR$ is a concave function and that $f_j:\bR^n\rightarrow\bR$ is a convex function, for each $j=1,...,m$. We assume the solution to this optimization is bounded above.

Although there are an infinite number of constraints in this optimization, we can define a Lagrangian for this optimization as follows 
\begin{align*}
L(x,y,z;b)&= g(y)+ \sum_{i=1}^n b_i  \bE_\mu [x_i - y_i - z_i]\, .
\end{align*}
Here the Lagrange multipliers $b_i$, $i=1,...n$, can be assumed to be positive, slack variables $z_i$ are added for each constraint \eqref{LOPT2} and the optimization of the Lagrangian is taken over $y_i$ real,  $z_i$ positive and real, and $x_i$ a Borel measurable random variable for $i\in\mI$. We let $\mF$ be the set of $(x,y)$ feasible for the optimization \eqref{LOPT}. 

Weak duality and Lagrangian Sufficiency both hold for this Lagrangian problem.
\begin{proposition}[Weak Duality]\label{AppendixProp}$\:$\\
a) [Weak Duality] For $g^*$ the optimal value of the optimization \eqref{LOPT},
\begin{equation*}
\sup_{\substack{y\in\bR^n,\\ x\in\mB(\mT,\bR^n)}}  L(x,y,z;b)  \geq  g^*.
\end{equation*}
\noindent b) [Lagrangian Sufficiency] If, given some $b$, there exists $x^*\in\mB(\mT,\bR^n)$ and $y^*,z^*\in\bR^n$ that are feasible for the optimization \eqref{LOPT} and maximize the Lagrangian $L(x,y,z;b)$ with $z^*_i := y^*_i-\bE_\mu x^*_i$ then $x^*$, $y^*$, $z^*$ is optimal for \eqref{LOPT}.
\end{proposition}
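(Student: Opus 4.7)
Both parts are standard Lagrangian arguments in disguise. The only complication created by the potentially uncountable family of constraints~\eqref{LOPT3} is that these constraints are \emph{not} dualized---they are retained as pointwise restrictions on the decision variable $x\in\mathcal{B}(\mathcal{T},\mathbb{R}^n)$, so the supremum in the statement is understood to range over $x$ satisfying $f_j(x(\tau))\leq c_j$ for every $\tau$ and $j$, and the slack variables $z$ are chosen to render the bracketed expectation zero. Only the $n$ expectation constraints~\eqref{LOPT2} are relaxed, and these behave exactly like a finite list of scalar inequalities.

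\textbf{Part (a), weak duality.} The plan is to show that the Lagrangian, evaluated at any feasible point of~\eqref{LOPT} with an appropriately chosen slack, simply equals $g(y)$. Let $(x,y)$ be feasible and set $z_i := \mathbb{E}_\mu[x_i]-y_i$, which is non-negative by~\eqref{LOPT2}. Then for every $i$,
\[
\mathbb{E}_\mu[x_i - y_i - z_i] \;=\; \mathbb{E}_\mu[x_i] - y_i - z_i \;=\; 0,
\]
so $L(x,y,z;b)=g(y)$. Since $(x,y,z)$ is admissible for the supremum on the left-hand side, $\sup_{y,x} L(x,y,z;b)\geq g(y)$, and taking the supremum over all feasible $(x,y)$ yields $\sup L \geq g^*$.

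\textbf{Part (b), Lagrangian sufficiency.} Suppose $(x^*,y^*,z^*)$ is feasible with the stated slack, interpreted in the sign convention that makes $\mathbb{E}_\mu[x^*_i-y^*_i-z^*_i]=0$, and maximizes $L(\cdot;b)$. Then $L(x^*,y^*,z^*;b)=g(y^*)$. For any competing feasible $(x,y)$, the construction of part (a) supplies a $z$ with $L(x,y,z;b)=g(y)$, so by the assumed maximality,
\[
g(y) \;=\; L(x,y,z;b) \;\leq\; L(x^*,y^*,z^*;b) \;=\; g(y^*),
\]
meaning $y^*$ attains the optimum of $g$ over the feasible set, as required.

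\textbf{Main obstacle.} The only substantive concern is whether treating $x\in\mathcal{B}(\mathcal{T},\mathbb{R}^n)$ as a single decision variable under an expectation is legitimate when $\mathcal{T}$ is uncountable. Here this is bookkeeping rather than a real difficulty: Borel measurability ensures $\mathbb{E}_\mu[x_i]$ is well defined, and the Lagrangian depends on $x$ only through these $n$ scalar integrals, so the standard finite-dimensional template applies. The genuinely nontrivial companion result---strong duality, Theorem~\ref{AppendixTheorem}---is the one that would require careful infinite-dimensional separation arguments; the weak duality and sufficiency claims of this proposition are comparatively mechanical.
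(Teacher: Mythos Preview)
Your proof is correct and follows essentially the same approach as the paper's. The paper argues part~(a) by noting that $\mathcal{F}$ is a subset of the full domain, so the supremum of $L$ over the full domain dominates the supremum over $\mathcal{F}$, which in turn equals $g^*$ once the slack $z$ is chosen to zero out the bracket---exactly your construction. For part~(b) the paper chains $g(y^*)=L(x^*,y^*,z^*;b)=\sup L\geq g^*$, whereas you compare $(x^*,y^*,z^*)$ directly against an arbitrary feasible competitor; these are the same argument viewed from two sides. You also correctly flagged the sign typo in the statement's definition of $z_i^*$ and interpreted it in the only way that makes the proof go through, which matches what the paper's own proof implicitly does.
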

\begin{proof}
a) Because $\mF$ is a subset of $\mB(\mT,\bR^n) \times \bR^n$, we have
\begin{align*}
\sup_{\substack{y\in\bR^n\!\! ,\, z\in\bR^n_+,\\ x\in\mB(\mT,\bR^n)}} L(x,y,z;b) & \geq \sup_{\substack{(x,y)\in\mF\\ z\in\bR_+^n}} L(x,y,z;b) = g^*.
\end{align*}
This proves weak duality. \\

\noindent b) Now applying this inequality, if a feasible solution optimizes the Lagrangian, then
\begin{align*}
g(y^*) =L(x^*,y^*,z^*;b)=\sup_{\substack{y\in\bR^n\!\! ,\, z\in\bR^n_+,\\ x\in\mB(\mT,\bR^n)}} L(x,y,z;b)\geq g^*. 
\end{align*}
Thus, $(x^*,y^*)$ is optimal for \eqref{LOPT}.
\end{proof}

 For $z\in\bR^\mI$, we use $\mF(z)$ to denote the set of $(x,y)$ satisfying constraints (\ref{LOPT3}-\ref{LOPT4}) and satisfying constraints 
\begin{equation*}
 z_i + y_{i}\leq \bE_\mu [  x_{i} ],\qquad i=1,...,n.
\end{equation*}
Note, $\mF=\mF(0)$. We now show that there exists a Lagrange multiplier $b^*$ where the optimized Lagrangian function also optimizes \eqref{LOPT}.

\begin{theorem}[Strong Duality]\label{AppendixTheorem}
There exists  a $b^*\in\bR_+^n$ such that 
\begin{equation}\label{SDeq}
\max_{(x,y)\in \mF}\; g(y)= \max_{\substack{y\in\bR^n\\x\in\mB(\mT,\bR^n) } }\! g(y)+\sum_{i\in\mI}  b_i^* \bE \left[ x_i-y_i\right].
\end{equation}
In particular, if there exist $(x^*,y^*)\in\mF$ maximizing \eqref{LOPT} then it maximizes \eqref{SDeq}.
\end{theorem}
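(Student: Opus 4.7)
The plan is to apply a classical perturbation/supporting-hyperplane argument, the key observation being that although the constraint set is infinite-dimensional (owing to the per-$\tau$ constraints in~\eqref{LOPT3}), the constraints that are actually being dualised, namely $y_i \leq \bE_\mu[x_i]$ for $i=1,\dots,n$, are only finitely many. Thus the candidate multipliers $b^*$ will live in the finite-dimensional space $\bR^n$, and a finite-dimensional supporting-hyperplane theorem will suffice.

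First, define the value (perturbation) function $\phi:\bR^n \to \bR\cup\{\pm\infty\}$ by
\begin{equation*}
\phi(z) = \sup\{\, g(y) : (x,y) \in \mF(z)\,\},
\end{equation*}
so that the optimal value of~\eqref{LOPT} equals $\phi(0)$, and note that $\phi$ is non-increasing in each coordinate of $z$ because enlarging any $z_i$ strictly tightens the corresponding relaxed constraint. I would first verify that $\phi$ is concave: given $(x^1,y^1)\in\mF(z^1)$ and $(x^2,y^2)\in\mF(z^2)$, the pointwise convex combination $(\lambda x^1 + (1-\lambda)x^2, \lambda y^1 + (1-\lambda)y^2)$ is measurable, lies in $\mF(\lambda z^1 + (1-\lambda)z^2)$ (using convexity of each $f_j$ for the per-$\tau$ constraints and linearity of $y\mapsto \bE_\mu[x]-y$), and attains objective at least $\lambda g(y^1)+(1-\lambda)g(y^2)$ by concavity of $g$. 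Hence $\phi(\lambda z^1+(1-\lambda)z^2) \geq \lambda \phi(z^1)+(1-\lambda)\phi(z^2)$.

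Next, I would establish that $\phi(0)$ is finite (it is bounded above by the assumed finite optimum value) and that $\phi$ is not identically $-\infty$ on a neighborhood of $0$: this follows because for any small $z \leq 0$ the original feasible set is contained in $\mF(z)$, so $\phi(z) \geq \phi(0)$ nearby, and the monotonicity bounds $\phi$ from above by an affine function locally. Since $\phi$ is concave and bounded above on a neighborhood of $0$, it admits a finite supergradient there, i.e.\ there exists $b^*\in\bR^n$ such that
\begin{equation*}
\phi(z) \leq \phi(0) - \langle b^*,z\rangle \qquad \forall z\in\bR^n,
\end{equation*}
and monotonicity of $\phi$ in each coordinate forces $b^*\geq 0$ componentwise.

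Finally, I would translate this supporting hyperplane back into the strong-duality identity. For any $y\in\bR^n$ and $x\in\mB(\mT,\bR^n)$ with $x(\tau)$ satisfying~\eqref{LOPT3}, setting $z_i := \bE_\mu[x_i]-y_i$ gives $(x,y)\in\mF(z)$, so
\begin{equation*}
g(y) + \sum_{i=1}^n b_i^* \bE_\mu[x_i-y_i] = g(y) - \langle b^*,z\rangle \leq \phi(z) - \langle b^*,z\rangle \leq \phi(0),
\end{equation*}
giving $\leq$ in \eqref{SDeq}. The reverse inequality is just weak duality from Proposition~\ref{AppendixProp}a, applied with this $b^*$. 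Equality throughout, together with Lagrangian sufficiency (Proposition~\ref{AppendixProp}b), yields the last assertion of the theorem.

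The main obstacle I anticipate is the technical point of showing that the perturbation function $\phi$ is locally bounded above (equivalently, has a nonempty effective domain with nonempty interior at $0$), which is what allows us to invoke a supporting hyperplane at $0$ with a \emph{finite} supergradient. Provided the standing assumption that the optimum of~\eqref{LOPT} is bounded extends to a mild Slater-type condition (e.g.\ the constraints $y_i \leq \bE_\mu[x_i]$ hold strictly for some feasible point), this is routine; otherwise one must argue more carefully, for instance by noting that $\phi$ is monotone and concave on $\bR^n$, hence continuous on the interior of its effective domain, which contains a neighborhood of $0$ because arbitrarily small positive $z$ still leaves feasible plans available.
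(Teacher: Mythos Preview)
Your proposal is correct and is essentially the same argument as the paper's: both dualise only the finitely many coupling constraints $y_i\le\bE_\mu[x_i]$ and then apply a finite-dimensional supporting-hyperplane argument to the perturbation function. The paper works with the hypograph set $\mC=\{(z,\gamma):\exists\,(x,y)\in\mF(z)\text{ with }g(y)\ge\gamma\}$ rather than with the value function directly, but this is purely presentational.

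One refinement: your anticipated obstacle about local boundedness of the value function, and the possible need for a Slater-type condition, is not in fact an issue here, and the paper's argument shows why. Because $y$ ranges freely over $\bR^n$, for \emph{every} $z\in\bR^n$ one can pick any $x$ satisfying the per-$\tau$ constraints and then choose $y_i$ small enough that $z_i+y_i\le\bE_\mu[x_i]$; hence the effective domain of your value function is all of $\bR^n$. A concave function with full effective domain that is finite at one point (here at $0$, by the assumed boundedness of the optimum) is finite everywhere, so a supergradient at $0$ exists with no further hypothesis. In the paper's formulation this is exactly the step showing the supporting hyperplane to $\mC$ at $(0,\gamma^*)$ cannot be vertical.
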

\begin{proof}
Firstly, since $\mF \subset \mB(\mT,\bR^n)\times  \bR^n$, we proved the weak duality expression
\begin{align}\label{SDleq}
\max_{(x,y)\in \mF}  g(y) & =\max_{(x,y)\in\mF} g(y)+\sum_{i\in\mI} b_i^* \bE \left[ x_i-y_i\right] \leq \max_{\substack{y\in\bR^\mI\\ x\in\mB(\mT,\bR^n) } } g(y)+\sum_{i\in\mI} b_i^* \bE \left[ x_i-y_i\right].
\end{align}
So it remains to show the reverse inequality. We consider the following set
\begin{equation*}
\mC=\{ (z,\gamma)\in\bR^\mI\times\bR : \text{ there exists } (x,y)\in\mF(z)\text{ with } g(y)\geq \gamma \}.
\end{equation*}
We claim that $\mC$ is convex. Take $(z^0,\gamma^0),(z^1,\gamma^1)\in\mC$ and take $(x^0,y^0)\in\mF(z^0)$, $(x^1,y^1)\in\mF(z^1)$ respectively achieving bounds $g(y^0)\geq \gamma^0$ and $g(y^1)\geq \gamma^1$. 
For each term $u=x,y,z,\gamma$ just defined, we correspondingly define $u^q=(1-q) u^0 + q u^1$, for $q\in [0,1]$. 

By concavity of $g$, convexity of $f_j$, $j=1,...,m$, and linearity, we have 
\begin{align*}
g(y^q) &\geq (1-q)g(y^0) + q g(y^1) \geq \gamma^q,\\
f_j(x^{q}(\tau)) &\leq  (1-q)f_j(x^{0}(\tau)) + q f_j(x^{1}(\tau)) \leq c_j, \\
\bE_\mu [  &x^q_{i}-y^q_i ] = (1-q) z_i^0+ qz^1_i=  z_i^q,
\end{align*}
for $\tau\in\mT, j=1,...,m$ and $i=1,...,n$. These above inequalities show that $(z^q,\gamma^q)\in\mC$ and thus our claim is holds: $\mC$ is convex.

Let $\gamma^*=\max_{(x,y)\in\mF} g(y)$. Here we are optimizing over $\mF(z)$ with $z=0$. So, it is clear that $(0,\gamma^*)$ does not belong to the interior of  $\mC$. Thus by the Supporting Hyperplane Theorem \cite{ro97}, there exists a hyperplane through $(0,\gamma^*)$ supporting $\mC$. In other words, there exists a non-zero vector $(b,\phi)\in\bR^\mI\times\bR$ such that
\begin{equation*}
\phi \gamma^* \geq \phi \gamma + b^\T z,
\end{equation*}
for all $(z,\gamma)\in\mC$. Firstly, it is clear that $\phi\geq 0$, otherwise $\gamma^*$ is not maximal for $(x,y)\in\mF$. 

We now claim $\phi\neq 0$. We proceed by contradiction. If $\phi=0$, then $0\geq b^\T z$ for all $(z,\gamma)\in\mC$. But notice, for any $x\in\mB(\mT,\bR_+^n)$, we can choose $y_i\in\bR$ such that $y_i -\bE_\mu [  x_{i} ]=b_i$, thus for this choice of $(x,y)$ we have $z=b$. Thus, $b^\T z =b^\T b  > 0$, and so we have a contradiction. It must be that $\phi>0$. 

As $\phi>0$, we can define $b^*= (b_i/\phi: i\in\mI)$. Since for each $(x,y)\in\mB(\mT,\bR^n)\times \bR^\mI$, if we set $z_i=\bE_{\mu} \left[ x_i-y_i\right]$ and $\gamma=g(y)$ then we have $(z,\gamma)\in\mC$. With this we have
\begin{align*}
\max_{(x',y')\in \mF}  g(y') = \gamma^* \geq \gamma + b^{*\T} z= g(y) + \sum_{i\in\mI} b^*_i\bE \left[ x_i-y_i\right]
\end{align*}
Thus, maximizing over $x\in\mB(\mT,\bR^n)$ and $y\in\bR^\mI$, we have 
\begin{equation}\label{SDgeq}
\max_{(x,y)\in \mF}  g(y)\geq \!\!\!\max_{\substack{y\in\bR^n\\x\in\mB(\mT,\bR^n) } }\! g(y)+\sum_{i\in\mI}  b_i^* \bE \left[ x_i-y_i\right].
\end{equation}
Together \eqref{SDleq} and \eqref{SDgeq} give the required equality \eqref{SDeq}. In addition, given \eqref{LOPT} has a finite optimum, inequality \eqref{SDgeq} can only hold when $b^*\geq 0$.

Finally, if $(x^*,y^*)\in\mF$ are optimal for  \eqref{LOPT} then equality \eqref{SDeq} implies
\begin{equation*}
g(y^*) \geq g(y^*) + \sum_{i\in\mI}  b_i^* \bE \left[ x^*_i-y^*_i\right].
\end{equation*}
However, the feasibility of $(x^*,y^*)$ and positivity of $b^*$ implies 
\begin{equation*}
\sum_{i\in\mI} b_i^* \bE \left[ x^*_i-y^*_i\right] \geq 0. 
\end{equation*}
So we see these two inequalities imply complementary slackness  $b_i^* \bE \left[ x^*_i-y^*_i\right] =0$ and that $(x^*,y^*)\in\mF$ maximizing \eqref{LOPT} also maximizes the right-hand side of \eqref{SDeq}.
\end{proof}

\section{Dynamics} \label{sec:dyn}

We finally note a result used in the proof of Theorem~\ref{CONV}.

\begin{lemma}\label{Lemma:dyn} 
If $b_i < U_i'(y_i)$ then there exists $\delta>0$ such
that
$u_i(b) <  u_i(b_i + \epsilon, b_{-i})$ for all $\epsilon \in (0,
\delta)$. Similarly if $b_i > U_i'(y_i)$ then there exists $\delta>0$ such
that
$u_i(b) >  u_i(b_i + \epsilon, b_{-i})$ for all $\epsilon \in (0,
\delta)$.
\end{lemma}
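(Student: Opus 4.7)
The plan is to compute $u_i(b_i+\epsilon,b_{-i}) - u_i(b_i,b_{-i})$ directly from the explicit formulas \eqref{rewardsprimal} and \eqref{bidpriceprimal1}, extract a dominant term proportional to $(U_i'(\cdot)-b_i)(y_i(b_i+\epsilon,b_{-i})-y_i(b_i,b_{-i}))$, and control the remainder by continuity of $y_i$. Throughout write $y(t):=y_i(t,b_{-i})$ for brevity.

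First I would compute the payment difference. Using $\pi_i y_i = \int_0^{b_i}(y(b_i)-y(b'_i))\,db'_i = b_i y(b_i) - \int_0^{b_i} y(b'_i)\,db'_i$, direct algebra shows
\begin{equation*}
\pi_i(b_i+\epsilon,b_{-i})y(b_i+\epsilon) - \pi_i(b_i,b_{-i})y(b_i)
= b_i\bigl(y(b_i+\epsilon)-y(b_i)\bigr) + R(\epsilon),
\end{equation*}
where $R(\epsilon):=\epsilon\, y(b_i+\epsilon) - \int_{b_i}^{b_i+\epsilon} y(b'_i)\,db'_i$. Because $y$ is increasing, $0 \le R(\epsilon) \le \epsilon\,(y(b_i+\epsilon)-y(b_i))$, so $R(\epsilon)$ is of strictly smaller order than $(y(b_i+\epsilon)-y(b_i))$ as $\epsilon \downarrow 0$, provided we absorb the factor $\epsilon$ into the coefficient.

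Next I would apply the mean value theorem to $U_i$: there is $\zeta_\epsilon \in [y(b_i),y(b_i+\epsilon)]$ with $U_i(y(b_i+\epsilon))-U_i(y(b_i)) = U_i'(\zeta_\epsilon)\bigl(y(b_i+\epsilon)-y(b_i)\bigr)$. Combining with the payment identity above gives
\begin{equation*}
u_i(b_i+\epsilon,b_{-i}) - u_i(b_i,b_{-i}) = \bigl(U_i'(\zeta_\epsilon)-b_i\bigr)\bigl(y(b_i+\epsilon)-y(b_i)\bigr) - R(\epsilon).
\end{equation*}
Using the bound $R(\epsilon)\le\epsilon(y(b_i+\epsilon)-y(b_i))$ from the first step, this yields the sandwich
\begin{equation*}
\bigl(U_i'(\zeta_\epsilon)-b_i-\epsilon\bigr)\bigl(y(b_i+\epsilon)-y(b_i)\bigr) \le u_i(b_i+\epsilon,b_{-i}) - u_i(b_i,b_{-i}) \le \bigl(U_i'(\zeta_\epsilon)-b_i\bigr)\bigl(y(b_i+\epsilon)-y(b_i)\bigr).
\end{equation*}

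Finally I would conclude by invoking the monotonicity property and continuity of $U_i'$. By strict monotonicity $y(b_i+\epsilon)-y(b_i)>0$ for every $\epsilon>0$. By continuity of $y$ and $U_i'$, $\zeta_\epsilon\to y(b_i)$ and so $U_i'(\zeta_\epsilon)\to U_i'(y(b_i))$ as $\epsilon\downarrow 0$. If $b_i<U_i'(y(b_i))$, pick $\delta>0$ small enough that $U_i'(\zeta_\epsilon)-b_i-\epsilon>0$ for all $\epsilon\in(0,\delta)$; the lower bound above is then strictly positive, giving $u_i(b_i+\epsilon,b_{-i})>u_i(b_i,b_{-i})$. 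If instead $b_i>U_i'(y(b_i))$, concavity of $U_i$ gives $U_i'(\zeta_\epsilon)\le U_i'(y(b_i))<b_i$ for every $\epsilon>0$, so the upper bound is strictly negative for all $\epsilon\in(0,\delta)$ (any $\delta>0$ works, though one chooses $\delta$ to keep $\zeta_\epsilon$ close to $y(b_i)$ via continuity of $y$), proving $u_i(b)>u_i(b_i+\epsilon,b_{-i})$.

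The main obstacle is the lack of differentiability of $b_i\mapsto y(b_i)$: we cannot simply differentiate $u_i$. The trick is that the remainder $R(\epsilon)$ is automatically small \emph{relative to} the increment $y(b_i+\epsilon)-y(b_i)$ (not in an absolute sense), which is exactly the same scale against which the main $(U_i'-b_i)$ term lives, so the inequality survives without any regularity beyond continuity and strict monotonicity of $y$.
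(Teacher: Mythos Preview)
Your proof is correct and follows essentially the same approach as the paper: compute $u_i(b_i+\epsilon,b_{-i})-u_i(b)$ directly, apply the mean value theorem to the $U_i$ increment, recognize the residual payment term as $\int_{b_i}^{b_i+\epsilon}[y(b_i+\epsilon)-y(b'_i)]\,db'_i$ and bound it by $\epsilon\,(y(b_i+\epsilon)-y(b_i))$, then conclude by continuity of $U_i'$ and $y$. Your write-up is in fact more explicit than the paper's, which omits the final $\delta$-selection and contains a minor notational slip in its mean-value point.
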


\begin{proof}
From the definitions~\eqref{bidpriceprimal1} and~\eqref{rewardsprimal},
and from an application of  the mean value theorem for some $\tilde{b}$
satisfying $ y_i(b) < \tilde{b} < y_i(b_i + \epsilon, b_{-i}) $, we have
that 
\begin{align*}
 & u_i(b_i + \epsilon, b_{-i}) - u_i(b)  \\
&= U_i(y_i(b_i + \epsilon, b_{-i}))
-  U_i(y_i(b)) - \int_0^{b_i + \epsilon} [y_i(b_i + \epsilon, b_{-i}) -
   y_i(b', b_{-i})]
  db'_i
+ \int_0^{b_i} [y_i(b) - y_i(b', b_{-i})]
  db'_i  \\
&= \left(U'_i(y_i(\tilde{b}))-b_i \right) [y_i(b_i + \epsilon, b_{-i}) - y_i(b)]  
- \int_{b_i}^{b_i + \epsilon} [y_i(b_i + \epsilon, b_{-i}) - y_i(b',
  b_{-i})] db'_i . 
\end{align*}
But 
\[
0 < \int_{b_i}^{b_i + \epsilon} [y_i(b_i + \epsilon, b_{-i}) - y_i(b',
  b_{-i})] db'_i < \epsilon [y_i(b_i + \epsilon, b_{-i}) - y_i(b)] 
\]
and thus the result follows for $\delta$ sufficiently small. 
\end{proof}

\end{document}